\definecolor{dkgreen}{rgb}{0,0.6,0}
\definecolor{gray}{rgb}{0.5,0.5,0.5}
\definecolor{mauve}{rgb}{0.58,0,0.82}
\tiny\color{gray},
\newcommand{\cL}{\mathcal{L}}
\newcommand{\cT}{\mathcal{T}}
\newcommand{\cU}{\mathcal{U}}
\newcommand{\cX}{\mathcal{X}}
\newcommand{\cY}{\mathcal{Y}}
\newcommand{\cW}{\mathcal{W}}
\newcommand{\R}{{\rm I}\kern-0.18em{\rm R}}
\newcommand{\h}{{\rm I}\kern-0.18em{\rm H}}
\newcommand{\K}{{\rm I}\kern-0.18em{\rm K}}
\newcommand{\p}{{\rm I}\kern-0.18em{\rm P}}
\newcommand{\E}{{\rm I}\kern-0.18em{\rm E}}
\newcommand{\Z}{{\rm Z}\kern-0.18em{\rm Z}}
\newcommand{\1}{{\rm 1}\kern-0.24em{\rm I}}
\newcommand{\N}{{\rm I}\kern-0.18em{\rm N}}
\newcommand{\argmin}{\mathop{\mathrm{argmin}}}
\def\@begintheorem#1#2{\trivlist \item[\hskip \labelsep{\bf #1\ #2.}]\sl}
\def\@opargbegintheorem#1#2#3{\trivlist
      \item[\hskip \labelsep{\bf #1\ #2\ (#3).}]\sl}
\newtheorem{theorem}{Theorem}[section]
\newtheorem{corollary}{Corollary}[theorem]
\newtheorem{lemma}[theorem]{Lemma}
\newtheorem{assumption}{Assumption}
\title{Automatic Double Machine Learning for Continuous Treatment Effects\thanks{I thank Alberto Abadie, Victor Chernozhukov, Lindsey Currier, Vitor Hadad, Claire Lazar Reich, Anna Mikusheva, Stephen Morris, Whitney Newey, Victor Orestes, Eitan Sapiro-Gheiler, Rahul Singh, Sophie Sun, Rafael Veil, Jaume Vives for helpful discussions and the participants of MIT econometrics lunch seminar for helpful comments}}
\author{Sylvia Klosin\footnote{email: klosins@mit.edu}}
\date{\today}
\begin{document}
\maketitle

\begin{abstract}

In this paper, we introduce and prove asymptotic normality for a new nonparametric estimator of continuous treatment effects.  Specifically, we estimate the average dose-response function \textemdash the expected value of an outcome of interest at a particular level of the treatment level. We utilize tools from both the double debiased machine learning (DML) and the automatic double machine learning (ADML) literatures to construct our estimator. Our estimator utilizes a novel debiasing method that leads to nice theoretical stability and balancing properties. In simulations our estimator performs well compared to current methods.

\paragraph{Keywords:} Average structural function, double machine learning, dose-response 
\paragraph{JEL Classification:} C14, C21, C55

\end{abstract}

\paragraph{}

\newpage

\section{Introduction}
 In this paper we propose a new nonparametric estimator of continuous treatment effects and prove asymptotic normality. Continuous treatment effects are of significant importance in applied economics. For example, labor economists study how the number of hours in a job training program impacts worker earnings \citep{flores2007estimating}, and political economists seek to understand how distance to polling locations impacts propensity to vote \citep{cantoni2020precinct}\footnote{Empirical examples from additional recent applied economics papers are given in Appendix \ref{section:more_emperical_examples}}.

There are many interesting statistical objects associated with continuous treatment effects. The object we specifically focus on estimating in this paper is the expected value of the outcome variable at a given level of the treatment variable. This object is also known as an average dose-response function or the average structural function \citep{blundell2001endogeneity}. Under conditional unconfoundedness assumptions, our parameter has a causal interpretation, and if conditional unconfoundedness does not hold, the parameter is still of descriptive interest.

Current applied work often estimates continuous treatment effects by imposing linear functional form assumptions. This may lead to treatment effect estimates that are biased and hard to interpret. To avoid parametric functional form assumptions, we suggest a nonparametric estimator. Our nonparametric estimator incorporates machine learning (ML) based on Lasso.

Though ML methods like Lasso produce accurate predictions, they induce bias in the parameter estimates. This ``regularization'' bias is due to the bias-variance trade-off inherent in minimizing the mean square error by the ML algorithms. Given the high variance of the ML methods, to keep mean square error low the algorithms must accept bias increases. This impacts not only the mean of the estimator, but estimation of its asymptotic distribution as well. Therefore simply estimating standard objects of interest by plugging in ML methods (called ``plug-in'' estimators) can lead to obstacles in statistical inference. For instance, using an off-the-shelf Lasso estimate of the regression function and plugging into the first step of the estimation procedure will not produce valid confidence intervals \citep{chernozhukov2016locally}. There are also inherent model selection problems that arise when using ML.

This paper aims to avoid these inference problems by adapting techniques for incorporating ML methods into estimators while preserving desirable statistical properties of these estimators. Specifically we build on work in \cite{chernozhukov2016locally} and \cite{chernozhukov2018double} that propose solutions to ``debias'' machine learning methods (DML). One important step in debiasing estimators is through sample splitting: averaging over observations of the data different than those used to estimate the models. Another important debiasing step is adding a debiasing term to the equation that defines the parameter of interest. 

In the initial DML literature the structure of debiasing terms is found analytically and then estimated by ML methods. However, in some settings the structure is not known, and even when it is, this estimation procedure might be undesirable. In these cases we can instead estimate these debiasing terms through a new procedure called automatic double debiased machine learning (Auto-DML) that estimates the debiasing term directly \citep{chernozhukov2018automatic}. For the estimands of interest in this paper --- points on the dose-response curve --- directly estimating the debiasing term is undesirable, for reasons described below, and we will instead use Auto-DML. However, tools from the current Auto-DML framework cannot be immediately applied to our continuous treatment effect setting given the localization around treatment values that estimation requires.

The contribution of this paper is to adapt the tools of Auto-DML toward the problem of estimating continuous treatment effects.  The papers that have studied the DML framework for continuous treatment effects include  \cite{su2019non} and \cite{colangelo2020double}. This paper is most closely related to the work of  \cite{colangelo2020double}, which presents and proves statistical properties of the DML version of the estimator we study here. A key difference between our estimators is how the debiasing term is constructed. Different versions of debiased continuous treatment effect estimators are given by \cite{kennedy_non-parametric_2017} and \cite{kallus2018policy}.

In our continuous treatment setting, the structure of the debiasing term is known, but there are several reasons we would still want to estimate continuous treatment effects with Auto-DML rather than DML. Specifically, the debiasing term includes the multiplicative inverse of the generalized propensity score (MIGPS), which is the multiplicative inverse of the probability density of the treatment variable given the covariates ($\frac{1}{f(t|X)}$)\footnote{Readers may be more familiar with the binary treatment case where the equivalent is Inverse Probability-of-Treatment Weighting (IPW) or Horvitz Thompson weights $\frac{1}{e(X)}$ where $e(x) = \p(T = 1| X)$. }. Current DML estimators involve plugging in an estimated MIGPS and then inverting it, which can be numerically unstable.  Our Auto-DML approach estimates the MIGPS directly instead. We incorporate this estimated term additively into the bias correction term rather than inversely.

In addition, directly estimating the MIGPS has desirable balancing properties. Dealing with balancing and trimming is one of the key components of applied work, and Auto-DML is tailor-made for balancing in comparison to DML. Current DML approaches for continuous treatment effects must conduct implicit trimming by using bounded kernels to achieve numerical stability; however, with trimming, we are no longer conducting inference on the original population, but instead on a sub-population of some form. There is limited theoretical justification for ad hoc trimming and censoring of this type \citep{crump2009dealing}. In comparison, Auto-DML does not require trimming through a kernel, and has been shown to work well without much additional trimming; e.g., in the binary IV treatment case \citep{singh2019biased}. Empirically we find through Monte Carlo simulations that when we avoid implicit trimming our estimator decreases root mean square error (RMSE) --- up to 50\% in certain specifications --- compared to current methods.

The issue of balancing and overlap is especially important in the continuous case. The number of people close to a fixed value $t$ with particular values of the covariates may be very small. Therefore, it has been noted that researchers must be careful when using inverse propensity weighting for continuous treatments because the effects may be exquisitely sensitive to the specification of conditional density \citep{hernan2010causal}. A motivation for this paper is to see if the advantages of Auto-DML over DML in the binary case transfer over to the continuous case.

The paper will continue in the following way. Section \ref{section:setup} sets up the framework of the paper and introduces the parameter of interest. The assumptions needed to identify this parameter of interest are given in Section \ref{section:identification}. In Section \ref{section:learningproblem} we describe the learning problem for estimating our parameter, which includes our novel MIGPS estimation procedure. Section \ref{section:asumptoticnormality} provides the theoretical results. Simulation results are described in Section \ref{section:numerical_examples}. Section \ref{section:conclusion} concludes.

\section{Setup} \label{section:setup}

\subsection{Notation}

We assume we have independent and identically distributed data $(W_1, \cdots, W_n)$ where the $W_i = (X_i, T_i, Y_i)$ are copies of a random variable $W$ with support $\cW = \cX \times \cT \times \cY$, with a cumulative distribution function (cdf) $F_{YTX}(Y,T,X)$. We use capital letters to denote random variables and lowercase letters to denote their possible values. For each unit in a large population $X_i \in \R^{p}$ denotes a vector of covariates, with $p$ potentially large, and $T_i \in \R$ as the continuous treatment.  We use the potential outcome framework \citep{rubin1974estimating} and $Y_i(t)$ denotes the potential outcome that would have been observed for individual $i$ under treatment level $t$. 

To simplify presentation let $\gamma_0(t, x)$ denote the true conditional expectation function $\E[Y| T = t, X = x]$. We also use $\alpha_0(t, X)$ to denote the true multiplicative inverse of the generalized propensity score (MIGPS) $\frac{1}{f_{T|X}(t|X)}$. As will be explained later, these two functions\footnote{they are functions and we stick to the language of the literature and call them nuisance parameters. } will be our nuisance parameters. They are called nuisance parameters because though we need them in order to estimate of parameter of interest, they themselves are not inherently of interest.

We use the $|\cdot|_q$ as the $\ell_q$ norm of a vector, and denote by $\|\cdot\|$ the $\cL_2$ norm of a random variable i.e. $\|X_i\| = \sqrt{\E[X_i]^2} $ . Latter when have random matrices

We denote the kernel function by $K_h(x) = \frac{1}{h}K(\frac{X - x}{h})$\footnote{We primarily use the gaussian kernel, and in the Section \ref{section:numerical_examples} we discuses the epanechnikov kernel as well}. Here $h$ is the bin-width of the kernel. Let the roughness of the kernel $K$ we are using be denoted by $R(K) := \int_{-\infty}^{\infty} K(u) du$ as in section 2.2 of \cite{hansen2009lecture}.

When we use sample splitting on the data, the full sample will be split in $L$ different folds. Let  $W_{\ell}$ for $\ell \in 1:L$ denote the data that is the $\ell$-th fold. Let $W_{\ell}^c$ denote the data that is the complement of the $\ell$-th fold \footnote{so if $L = 5$, then $W_1^c = W_2 \cup W_3 \cup W_4 \cup W_5 $}.

\subsection{Parameter of Interest} \label{section:parameterofinterest}

For a fixed value $t$ the goal is to estimate

\begin{equation}
\label{eq:parameter_of_interest}
    \beta_{t} = \E[Y(t)] = \E[\gamma_0(t, X)].
\end{equation}

This object is also known as an average dose-response function. It is also known as the average structural function \citep{blundell2001endogeneity}.

 This parameter of interest $\beta_{t}$ is implicitly defined by the moment function $g$

\begin{equation}
\label{eq:moment_condition}
g(W, \beta, \gamma_{0}) = \gamma_0(t, X) - \beta  
\end{equation}

\begin{equation}
\E[g(W, \beta, \gamma_0)] = 0 \text{ iff } \beta = \beta_{t}
\end{equation}

 Here $\gamma_0(t,X)$ is the true conditional expectation regression function, and it is a nuisance parameter that must be estimated in order to estimate the parameter of interest.

\paragraph{Example}

Now we define a $\beta_{t}$ for an example data generating process (DGP) in order to help the reader better understand our object of interest. Let's say we have a covariate $X_1 \sim N(1,1)$ and noise $\epsilon \sim N(0, 1)$ and treatment $T \sim N(2,1)$ and our outcome $Y = e^{T} + 3 X_1 + \epsilon$. Let's say we take $t = 0$, and so we are interested in $\beta_{0}$. In this case $\gamma_0(0, X) = e^{0} + 3X_1$ and so $\E[\gamma_0(0, X)] = e^{0} + 3 \E[X_1] = 1 + 3 = 4$. Hence in this toy example $\beta_0 = 4$

\section{Identification}  \label{section:identification}

Our parameter of interest $\beta_{t}$ is a function of potential outcomes which are not observed in the data directly. Therefore, we need to make assumptions which enable us to use the observational data to do inference about $\E[Y(t)]$. The following characterization of $\beta_{t}$ will be useful,

\begin{equation}
  \beta_{t} = \int_X \E[Y|T = t, X] dF_x(X) = \E[\gamma(t, X)] =  \lim_{h \rightarrow 0} \int_{T} \int_{Y} \int_X \frac{K_{h}(T - t) Y}{f_{T|X}(t|X)} dF_{Y,T, X}(Y, T, X).
\end{equation}

The assumptions outlined in this section are the ones required for identification of causal effects. They must hold for every $t \in \cT$ that a researchers wants to do inference on. As mentioned in \cite{kennedy_non-parametric_2017}, when a researcher has randomized experimental data, this assumptions hold. With observation data these assumptions are harder to justify, and generally impossible to impossible to test, however the estimate is still useful. The estimator still gives us an adjusted measure of association that is interesting in its own right. 

\begin{assumption} (Identification)
\label{assump:identification} 
\begin{enumerate}
    \item (Conditional unconfoundedness)  $Y(t) \perp T | X \quad $
    \item  (Overlap) For any $t \in \cT$ and $X \in \cX$, $f_{T|X}(t|X)$ is bounded away from zero \label{assump:overlap} 

    \item (Consistency) $T = t$ implies $Y = Y(t)$ \label{assump:consistency}
\end{enumerate}

\end{assumption}

Conditional unconfoundedness assumption is also known as the ``ignorability'' assumption. It is also related to the exogeneity assumptions made in applied economics work.  The assumptions intuitively means that controlling for covariates $X$, the treatment level $t$ is effectively random. See \cite{imbens_role_2000} for weaker form of this.

Overlap is also known as the ``positivity'' assumption. We are assuming that the propensity score is uniformly bounded away from 0 for all values in the support of the pre-treatment variables. As discussed in \cite{imbens_role_2000}, when one has a continuous treatment this may be harder to satisfy than the more commonly studied binary treatment case. This condition also becomes harder to satisfy when the dimensionality of the covariates is large.

Consistency is a causal assumption that is not always explicitly stated, but is basically always assumed in some form. Consistency means that the observed outcome for individuals with treatment level $t$ equals her outcome if she had received treatment $t$ \citep{hernan2010causal}.

\section{Learning Problem} \label{section:learningproblem}

\subsection{Debiased Moment}

Now we are ready to explain our proposed estimation procedure. Above in equation \eqref{eq:moment_condition} we gave the moment function that defined our parameter of interest $g(W, \beta, \gamma_{0}) = \gamma_0(t, X) - \beta  $. 

If a researcher wanted to estimate the parameter with ML, a first pass at the problem could be a ``plug-in'' approach. In such an approach, we would use ML to fit the model for nuisance parameter $\gamma(t,X)$, and then predictions of the model would be used to create our parameter $\beta_t$  according to the moment function. 

However this plug-in approach leads to large bias in our estimate of $\beta_t$ as explained in the introduction. The DML approach gives us a way to create a new debiased moment function which enables us to avoid the bias of the plug-in approach. We denote this new debiased moment function by $\psi$. 

\begin{equation}
\label{eq:debiased_moment_function}
    \begin{aligned}
         \psi(W_i, \beta_0, \gamma_0, \alpha_0) &= g(W_i, \beta_0, \gamma_0) + \phi(W_i, \beta_0, \gamma_0, \alpha_0) \\
         &= \gamma_0(t, X_i) - \beta_0 + K_h(T_i - t) \alpha_0(t, X)(Y_i - \gamma_0(t, X_i))
    \end{aligned}
\end{equation}

We call $\phi(w, \beta, \gamma, \alpha)$ our ``debiasing term''. As with our original moment function  \eqref{eq:moment_condition}, it is also a function of $\gamma(t, X)$, but now we introduce a new second nuisance parameter $\alpha(t, X)$ corresponding to the multiplicative inverse of the propensity score (MIGPS). In the continuous treatment effect setting of our paper we know that the correct debiasing term is $\alpha_0(t, X) = \frac{1}{f(t|X)}$. Our paper will be following the Auto-DML literature and estimating $\alpha$ directly \citep{chernozhukov2018automatic}. This is in contrast to the current literature, in which $\hat{f}(t|X)$ is estimated as a function of $X$ and then $\hat{\alpha}(t,X) = \frac{1}{\hat{f}(t|X)}$.

Note that our debiasing term $\phi(w, \beta, \gamma, \alpha)$ is a function of a kernel $K_h(T_i - t)$, where we are localizing around the specific treatment level we are interested in $t$. As explained in \cite{colangelo2020double}, as the bin-width of the kernel $h \rightarrow 0$ we have Neyman orthogonality as defined in \citep{chernozhukov2016locally} \citep{neyman1959optimal}.

\subsubsection{Estimation} \label{subsection:estimator}

We use the empirical analog of the debiased moment function \eqref{eq:debiased_moment_function} as our estimator.

\begin{equation}
\label{eq:estimator}
    \hat{\beta}_t = \frac{1}{n} \sum_{\ell=1}^{L} \sum_{i \in \ell} \hat{\gamma}_{\ell}(t,X_i) +  K_h(T_i - t) \hat{\alpha_{\ell}}(t, X_i)(Y_i - \hat{\gamma}_{\ell}(t, X_i))
\end{equation}

To construct this estimate of $\beta_t$ in practice there are two different estimation stages: stage 1 for estimating nuisance parameters $\hat{\gamma}$ and $\hat{\alpha}$, and stage 2 for estimating the parameter of interest $\hat{\beta}_t$. Now we provide a brief outline of the process because going into the details of each step

\begin{enumerate}

\item[Stage 1]
\begin{itemize}
    \item[i] Start with data splitting. First pick the number of splits $L$, where $L \in \{2, \cdots, n\}$\footnote{Common default numbers of splits include $L = 5$ and $L = 10$}. Then partition the observations indices into the $L$ different groups. We use $\ell$ to denote these groups $\ell = 1, \cdots, L$. Denote observations in group $\ell$ by $W_{\ell}$
    \item[ii] For each fold $\ell$ estimate the nuisance parameters $\hat{\alpha}_{\ell}$ and $\hat{\gamma}_{\ell}$ 
\end{itemize}

    \item[Stage 2] 
    
    \begin{itemize}
        \item[iii] Using the nuisance parameters predicted on the left out folds construct the new debiased moment function $\psi$ to create our estimate of $\beta_t$ by summing across all observations in \eqref{eq:estimator}
    \item[iv] Calculate the variance using the new moment function

    \begin{equation}
    \label{eq:variance_estimator}
\begin{aligned}
\hat{V}_t &= \frac{h}{n} \sum_{\ell = 1}^L \sum_{i \in W_{\ell}} \hat{\psi}_{\ell i}^2  \\
&=\frac{h}{n} \sum_{\ell = 1}^L \sum_{i \in I_{\ell}} \bigg(\hat{\gamma}_{\ell}(t, X_i) - \hat{\beta}_t + K_h(T_i - t) \hat{\alpha}_{\ell}(t, X)(Y_i - \hat{\gamma}_{\ell}(t, X_i))  \bigg)^2
\end{aligned}
    \end{equation}
    
   \end{itemize}
\end{enumerate}

Now we go into details of the construction.

\paragraph{Stage 1}

Estimating  $\hat{\alpha}(t, X)$ and $\hat{\gamma}(t, X)$.

We approximate the value of the MIGPS at $t$ by a suitable linear combination of basis functions of the observed covariates, with coefficients localized at $t$. Let $b(x)$ be a $p \times 1$ dictionary of functions.\footnote{For example, when we simulate the estimator in section \ref{subsection:simulation study} we set $b(X)$ to be a fifth order polynomial set of the covariate variables.}

Our goal is to find a vector of coefficients $\hat{\rho}_t$ \footnote{Note that $\hat{\rho}_t$ has a $t$ subscript because the coefficient is for a specific level of the continuous treatment $t$} for our dictionary such that

$$\widehat{\frac{1}{f_{T|X}(t|X)} } :=  \hat{\alpha}(t, X) = b(X)\hat{\rho}_t.$$ 

As we'll discuss below, we can find the appropriate $\hat{\rho}_t$ by solving the following Lasso problem, 
\begin{equation}
    \hat{\rho}_{t} = \argmin_{\rho} \{ - 2 \hat{M}' \rho + \rho' \hat{Q}\rho + 2 r_L|\rho|_1 \} \quad  |\rho_t|_1 = \sum_{j=1}^p|\rho_j|,
\end{equation}
where $\hat{M}$ and $\hat{Q}$ are defined as
\begin{equation}
    \hat{M} = \frac{1}{n} \sum_{i = 1}^{n} b(X_i)\quad \text{and} \quad \hat{Q} = \frac{1}{n} \sum_{i = 1}^{n} K_h(T_i - t) b(X_i) b(X_i)'.
\end{equation}

Here $r_L$ is the Lasso regularization parameter.  Often, when running Lasso, $r_L$  is selected via cross-validation. However, given that we don't know what true treatment effect, we are not able to use cross validation to pick $r_L$ in a traditional way in our case. Therefore, we follow the iterative procedure in \cite{chernozhukov2018automatic}  to determine its value in practice.

Before continuing, we give some intuition for the structure of estimation problem. 

The goal is to  find $\hat{\rho}_t$ such that 

\begin{equation}
      b(X)\hat{\rho}_t = \widehat{\frac{1}{f_{T|X}(t|X)}} 
\end{equation}

If we actually observed $\frac{1}{f_{T|X}(t|X)}$ we could find the $\hat{\rho}_t$ that enables us to approximate it best by solving a weighted least squares problem of the following form, 
\begin{equation}
\label{eq:weighted_ols}
\begin{aligned}
       \rho_t &= \argmin_{\rho} \E[K_h(T - t)(\frac{1}{f(t|X)} - \rho'b(X))^2] \\ 
             &= \argmin_{\rho} - 2\rho \textcolor{blue}{ \E[K_h(T - t)\frac{b(X)}{f(t|X)}]} + \rho' \textcolor{red}{\E [K_h(T - t) b(X)b(X)']} \rho .
\end{aligned}
\end{equation}

approximate with

\begin{equation}
\label{eq:final_lasso_problem}
    \hat{\rho}_t = \argmin_{\rho} - 2\rho\textcolor{blue}{ \frac{1}{n} \sum_{i = 1}^{n} b(X_i)} + \rho' \textcolor{red}{\frac{1}{n} \sum_{i = 1}^{n} K_h(T_i - t) b(X_i) b(X_i)'} \rho 
\end{equation}

We formalize this intuition in the next Lemma.

\begin{lemma}
\label{lemma:approximate_weighted_ols}
The minimization problem in equation \eqref{eq:weighted_ols} can be approximated by the final minimization problem we solve on in equation \eqref{eq:final_lasso_problem}
\end{lemma}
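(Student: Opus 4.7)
The plan is to verify the two approximations term-by-term: the quadratic coefficient matrix (the red expression) and the linear coefficient vector (the blue expression). Both approximations combine a law-of-large-numbers step with, for the linear term, a kernel-density identity that collapses the weighted integral to an unweighted one as $h \to 0$.

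First I would handle the red (quadratic) term, which is the easier direction. The empirical average
\[
\hat Q \;=\; \frac{1}{n}\sum_{i=1}^{n} K_h(T_i - t)\, b(X_i)\, b(X_i)'
\]
is an i.i.d.\ sample mean, so under a standard moment condition on $b(X)$ and a bounded kernel $K_h$, a law of large numbers (applied entrywise, or a uniform LLN over a compact parameter set if one wants uniform convergence of the objective) gives $\hat Q \to \E[K_h(T-t)\,b(X)b(X)']$. No manipulation beyond LLN is required here, since the population object on the right-hand side is exactly what appears in \eqref{eq:weighted_ols}.

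Next I would handle the blue (linear) term, which requires the key calculation. Conditioning on $X$ and using that $T \mid X$ has density $f_{T\mid X}(\cdot \mid X)$,
\[
\E\!\left[K_h(T-t)\,\frac{b(X)}{f_{T\mid X}(t\mid X)}\right]
\;=\; \E_X\!\left[\frac{b(X)}{f_{T\mid X}(t\mid X)}\int K_h(\tau - t)\, f_{T\mid X}(\tau \mid X)\, d\tau\right].
\]
For any $X$ at which $\tau \mapsto f_{T\mid X}(\tau\mid X)$ is continuous at $t$, the usual kernel-smoothing identity gives
$\int K_h(\tau-t)\, f_{T\mid X}(\tau\mid X)\, d\tau \to f_{T\mid X}(t\mid X)$ as $h \to 0$, so the bracketed factor cancels the denominator and we obtain the limit $\E[b(X)]$. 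A dominated-convergence argument, using the overlap assumption (\ref{assump:overlap}) to bound $1/f_{T\mid X}(t\mid X)$ away from infinity and a moment bound on $b(X)$, legitimizes the interchange of limit and outer expectation. Combined with the law of large numbers $n^{-1}\sum_i b(X_i) \to \E[b(X)]$, this is exactly what is needed to match the linear term of \eqref{eq:final_lasso_problem}.

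Having established pointwise convergence of the (penalized) objective in \eqref{eq:final_lasso_problem} to that in \eqref{eq:weighted_ols} (with the penalty $r_L|\rho|_1 \to 0$ at an appropriate rate), the final step is to invoke a standard consistency-of-M-estimator argument: because the population objective is convex in $\rho$ and its minimizer is well-defined, convergence of the objectives implies convergence of the minimizers. I expect the main obstacle to be the blue term: one must carefully justify the limit $h \to 0$ uniformly enough in $X$ to push it through the outer expectation, which is where both the overlap assumption and smoothness of $f_{T\mid X}(\cdot\mid X)$ at $t$ are doing the essential work; the red-term LLN and the convex M-estimator passage are routine by comparison.
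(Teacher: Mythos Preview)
Your proposal is correct and follows essentially the same route as the paper: the key step in both is to condition on $X$, use the kernel-smoothing identity $\int K_h(\tau-t)f_{T\mid X}(\tau\mid X)\,d\tau \to f_{T\mid X}(t\mid X)$ to cancel the denominator in the blue term, and arrive at $\E[b(X)]$, which is then replaced by its sample mean. Your treatment of the red term is actually cleaner than the paper's---you correctly note that matching the red expressions in \eqref{eq:weighted_ols} and \eqref{eq:final_lasso_problem} requires only a law of large numbers and no kernel limit, whereas the paper gestures at a ``similar argument for $Q$'' that is not strictly needed for this lemma; and your closing M-estimator consistency step (convex objective, pointwise convergence implies argmin convergence) makes explicit a passage the paper leaves implicit.
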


Note from the first order conditions of the Lasso we get this following balancing equation

\begin{equation}
    |\frac{1}{n} \sum b(X_i) - \frac{1}{n} \sum K_h(T_i - t)b(X_i) b(X_i)' \rho_{L} | \leq r_L
\end{equation}

From this expression we see that $b(X_i)' \rho_{L}$ serve to approximately balancing the overall sample average with the sample average of the group with treatment value $T$ close to our $t$ of interest. This type of balancing condition for the binary treatment case is given in \cite{chernozhukov2018automatic}, \cite{athey2016approximate}, \cite{zubizarreta2015stable}.

Finally, to estimate $\hat{\gamma}(t,X)$ (the CEF), we similarly project $\gamma_0(t,X)$ onto a $p$-dimensional dictionary. We focus on using Lasso in this paper. Given that we make predictions $\hat{\gamma}(t, X)$ at a fixed point $t$, we require uniform rates, which are available for Lasso.

\paragraph{Stage 2}

From stage 1 we have functions $\hat{\alpha}_{\ell}(t,X)$ and $\hat{\gamma}_{\ell}(t,X)$ for each of the folds $\ell$. Recall that  $\hat{\alpha}_{\ell}(t,X)$ and $\hat{\gamma}(t,X)$ were fit using $W_{\ell}^C$, and now in stage two we use these functions to find fitted values for observations in $W_{\ell}$. Sum up all observations in all folds, and sum up all folds and divide by $n$ as in equation \eqref{eq:estimator}

Similarly, sum up over all observations at in equation \eqref{eq:variance_estimator} to calculate the estimate of the variance.

\section{Asymptotic Normality} \label{section:asumptoticnormality}

Now we prove the asymptotic normality of the estimator we described in the previous section. Like estimation section, we start with discussing the first stage of the estimation process in Section \ref{subsection:stage_1}

\subsection{Stage 1} \label{subsection:stage_1}

The normality of our estimator of $\beta_t$ depends on the estimation of the multiplicative inverse of the generalized propensity score (MIGPS) $\alpha(t,X)$ and the conditional expectation function $\gamma(t,X)$. We provide conditions for $\cL_2$ convergence rates for our estimators of the nuisance parameters.

 We adapt the assumptions from \cite{chernozhukov2016locally} to our setting. 

\begin{assumption} (Bounded dictionary) 
\label{assump:bounded_basis}
There exists a $C$ such that with probability one 
\begin{equation}
    \max_{1 \leq j \leq p} |b_j(X)| \leq C
\end{equation}
\end{assumption}

This assumption could also be weakened to allow for the bound on the basis functions to be an increasing function of the sample size $B_n$ rather than a constant $C$.

The next two assumptions - Assumptions \ref{assump:alpha_slow_rate} and \ref{assump:alpha_fast_rate} control the complexity of the true function $\alpha_0(t,X)$. Intuitively, the less complex $\alpha_0(t,X)$, the faster we can estimate it. Therefore the complexity of $\alpha_0(t,X)$ governs the convergence rate $\|\hat{\alpha}_0(t,X) - {\alpha}(t,X)\|$. When using the estimator we can assume either Assumption \ref{assump:alpha_slow_rate} or \ref{assump:alpha_fast_rate} - whichever we find plausible. If we use Assumption  \ref{assump:alpha_slow_rate} we will get a slower rate, and if we use Assumption \ref{assump:alpha_fast_rate} we will get a faster rate. We call the complexity in Assumption \ref{assump:alpha_slow_rate} the dense regime and in Assumption \ref{assump:alpha_fast_rate} the sparse regime.  

We will prove the convergence rates under the two different regimes in Lemmas \ref{lemma:alpha_convergence_slow} and \ref{lemma:alpha_convergence_fast} respectively.

\begin{assumption} 
\label{assump:alpha_slow_rate} (Dense regime)
For every $n$, there exists some $p \times 1$  sequence of coefficients $\rho_n$ and a positive constant $C < \infty$ such that $|\rho_n|_{1} \leq C$ and $\|\alpha_0 - b' \rho_n \|^2 = O(\sqrt{\frac{\ln(p)}{hn}} + h^2)$
\end{assumption}

Assumption \ref{assump:alpha_slow_rate} says that $\alpha_0$ can be approximated by our dictionary $b$. An example of when this assumption holds is when  $\alpha_0$ is a linear combination of the elements of the $b$ dictionary.  The specific $\epsilon_n$ rate assumed here follows from Lemma \ref{lemma:q_and_m_convergence}.

\begin{assumption} 
\label{assump:alpha_fast_rate} (Sparse regime)
Assume that the following hold.
\begin{enumerate}
    \item There exists $C, \xi >0$ such that for all $\bar{s}$ with  $\bar{s} \leq C(\sqrt{\frac{\ln(p)}{hn}} + h^2)^{\frac{-2}{(1 + 2 \xi)}}$  there is a $\bar{\rho} \in \R^p$ with $|\bar{\rho}|_1$ and $\bar{s}$ nonzero elements s.t. 
    
    \begin{equation}
        \| \alpha_0 - b' \bar{\rho} \| \leq C (\bar{s})^{- \xi}
    \end{equation}
    
    \item $Q = \E[f(t|X) b(X) b(X)']$ is nonsingular and has the largest eigenvalue uniformly bounded in $n$ 
    
    \item  for $\rho=\bar{\rho}$  and $\rho=\arg\min_{\rho}\{\left\Vert \alpha_{0}-b^{\prime}\bar{\rho}\right\Vert
^{2}+2r_{L}\sum_{j=1}^{p}\left\vert \rho_{j}\right\vert \}$\textit{
there is }$k>3$ such such that 

\begin{equation}
\label{eq:restricted_eigenvalue_condition}
    \inf_{\{\delta:\delta\neq0,\sum_{j\in\mathcal{J}_{\rho_{}}^{c}}|\delta
_{j}|\leq k\sum_{j\in\mathcal{J}_{\rho_{}}}|\delta_{j}|\}}\frac
{\delta^{\prime}Q\delta}{\sum_{j\in\mathcal{J}_{\rho_{}}}\delta_{j}^{2}}>0
\end{equation}

where $\mathcal{J}_{\rho_{}} = support(p)$

\end{enumerate}
\end{assumption}

Part 3 of assumption \ref{assump:alpha_fast_rate} is a population version of the restricted eigenvalue condition of \cite{bickel2009simultaneous} as adapted in \cite{chernozhukov2018automatic}. A clear introduction of the restricted eigenvalue condition is given in \cite{tibshirani2016closer}. 
 
To give some intuition about this part of assumption \ref{assump:alpha_fast_rate}, we relate it to the ``no perfect multicollinearity'' assumption of classic linear regression. When doing a linear regression of an outcome variable $Y$ on a covariate matrix $X$, we have to make sure that we don't have muticolinarity in order to ensure that matrix $\E[X^{T}X]$ is invertible. The matrix will only be invertible if none of the eigenvalues of the matrix are zero. We do not need such a strong condition in our case for matrix $Q$. Given that we are assuming a form of sparsity, we only require the invertability for the sub-matrices of $Q$ that we are considering.

\begin{assumption}(Regularization) \label{assump:regularization}
\begin{equation}
    r_{n} = a_n (\sqrt{\frac{\ln(p)}{hn}} + h^2) \text{ for some } a_n \rightarrow \infty
\end{equation}
\end{assumption}

To satisfy this assumption we set $a_n = \ln(\ln(n))$, following \cite{chatterjee2015prediction}. In practice, we pick a data-driven following \cite{chernozhukov2018automatic}.

\begin{lemma}
\label{lemma:q_and_m_convergence}

If Assumption \ref{assump:bounded_basis} holds for all $j$ then for $Q = \E[f(t|X) b(X) b(X)']$, 

\begin{equation}
    |\hat{Q}_{\ell} - Q |_{\infty} = O_p(\sqrt{\frac{\ln(p)}{hn}} + h^2) 
\end{equation}

\end{lemma}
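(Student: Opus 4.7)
My plan is to decompose the error into a centered stochastic piece and a deterministic kernel-smoothing bias,
\begin{equation*}
\hat{Q}_\ell - Q = \bigl(\hat{Q}_\ell - \E[\hat{Q}_\ell]\bigr) + \bigl(\E[\hat{Q}_\ell] - Q\bigr),
\end{equation*}
and bound each entrywise in the $|\cdot|_\infty$ sense. Since $L$ is a fixed constant, $|W_\ell^c|$ is of the same order as $n$, so for the purposes of rates I can treat $\hat{Q}_\ell$ as a sample average over $n$ i.i.d.\ copies of the matrix $K_h(T-t) b(X) b(X)'$.

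For the bias term, I would fix a pair of indices $(j,k)$, condition on $X$, and change variables $u = (T-t)/h$ to obtain
\begin{equation*}
\E\bigl[K_h(T - t) b_j(X) b_k(X) \mid X\bigr] = b_j(X) b_k(X) \int K(u)\, f_{T|X}(t + h u \mid X)\, du.
\end{equation*}
Under standard kernel-smoothing conditions---$f_{T|X}(\cdot \mid X)$ twice continuously differentiable at $t$ with second derivative uniformly bounded in $X$, and $K$ a symmetric second-order kernel with $\int u^2 K(u)\,du < \infty$---a Taylor expansion yields $\int K(u) f_{T|X}(t + hu \mid X) du = f_{T|X}(t \mid X) + O(h^2)$ uniformly in $X$. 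Combined with the boundedness from Assumption \ref{assump:bounded_basis}, integrating over $X$ gives $|\E[\hat{Q}_\ell] - Q|_\infty = O(h^2)$ uniformly over all $(j,k)$.

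For the stochastic piece I would apply Bernstein's inequality entrywise together with a union bound over the $p^2$ pairs. Let $Z_i^{jk} := K_h(T_i - t) b_j(X_i) b_k(X_i) - \E[K_h(T - t) b_j(X) b_k(X)]$. Assumption \ref{assump:bounded_basis} and the boundedness of $K$ imply $|Z_i^{jk}| \leq 2 C^2 \|K\|_\infty / h$, and the computation $\E[K_h(T - t)^2 \mid X] = h^{-1} f_{T|X}(t \mid X)\int K^2(u)\,du + o(h^{-1})$ gives $\var(Z_i^{jk}) = O(1/h)$ uniformly in $(j,k)$. Bernstein's inequality then yields, for $\tau = o(1)$,
\begin{equation*}
\p\!\left(\left|\tfrac{1}{n_\ell}\sum_{i \in W_\ell^c} Z_i^{jk}\right| > \tau \right) \leq 2 \exp\!\left(- \frac{c\, n\, h\, \tau^2}{1 + \tau}\right).
\end{equation*}
Choosing $\tau = A \sqrt{\ln(p)/(hn)}$ with $A$ sufficiently large and union-bounding over the $p^2$ entries gives $|\hat{Q}_\ell - \E[\hat{Q}_\ell]|_\infty = O_p(\sqrt{\ln(p)/(hn)})$. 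Adding the bias and variance bounds yields the claim.

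The main obstacle I anticipate is bookkeeping around two implicit regularity conditions. First, the lemma does not itself state the smoothness of $f_{T|X}(\cdot \mid X)$ required to reach the $O(h^2)$ bias; this must be imposed (or inherited from the identification assumptions) and carried uniformly in $X$. Second, one must confirm $\ln(p) = o(h n)$ so that the sub-Gaussian branch of Bernstein is active---otherwise the sub-exponential term $\tau/h$ would swamp $1/h$ in the denominator and degrade the rate below $\sqrt{\ln(p)/(hn)}$. Both conditions are mild and are implicitly required by the statement of the lemma itself, but they are worth flagging at the head of the argument.
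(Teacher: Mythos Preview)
Your proposal is correct and follows essentially the same strategy as the paper: a kernel-bias piece handled by Taylor expansion (yielding $O(h^2)$) and a stochastic piece handled by a Bernstein-type exponential inequality plus a union bound over the $p^2$ entries (yielding $O_p(\sqrt{\ln(p)/(hn)})$). The only organizational difference is that the paper splits the problem into three terms---first centering around the conditional mean $\E[K_h(T-t)\mid X]$, then isolating the Taylor-expansion bias $\E[K_h(T-t)\mid X]-f(t\mid X)$, and finally applying Hoeffding to the $X$-only remainder $\tfrac{1}{n}\sum b(X_i)b(X_i)'f(t\mid X_i)-Q$---whereas you collapse the first and third of these into a single centered sum; both routes invoke the same variance and boundedness calculations and arrive at the same rate.
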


The proof for Lemma \ref{lemma:q_and_m_convergence} is given in Appendix \ref{proof:q_and_m_convergence}. We denote the rate proven here as $\epsilon_n  = \sqrt{\frac{\ln(p)}{hn}} + h^2$ for the rest of the paper. This rate impacts the rates at which the nuisance parameters can estimated. In many examples in the Auto-DML literature - like the binary treatment case with IV - the rate $\epsilon = \sqrt{\frac{\ln(p)}{n}}$. In our case we have a slower rate with an additional $h^2$ because we are estimating at a fixed point $t$.

\begin{lemma} 
\label{lemma:alpha_convergence_slow} (Dense regime)
If assumption  and \ref{assump:alpha_slow_rate} holds then for any $r_L$ such that $\epsilon_n = o(r_L)$

    \label{lemma:alpha_convergence}
    \begin{equation}
        \label{eq:alpha_convergence_slow}
        \|\hat{\alpha}_{\ell}(t, X_i) - \alpha_0(t, X_i) \| = O_p(\sqrt{r_L})
    \end{equation}
\end{lemma}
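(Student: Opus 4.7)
My plan is to adapt the standard Lasso ``slow-rate'' argument of \cite{chernozhukov2018automatic} to the kernel-weighted Gram matrix $\hat Q_\ell$. Throughout, write $M := \E[b(X)]$ and $Q := \E[f(t|X)b(X)b(X)']$, let $\rho_n$ be the approximating vector from Assumption \ref{assump:alpha_slow_rate}, and set $\hat\delta := \hat\rho_\ell - \rho_n$. The first move is to use primal optimality of $\hat\rho_\ell$ in the Lasso criterion, comparing to $\rho_n$ and expanding the quadratic around $\rho_n$, which produces the basic inequality
\[
\hat\delta'\hat Q_\ell \hat\delta + 2r_L|\hat\rho_\ell|_1 \;\le\; 2(\hat M - \hat Q_\ell \rho_n)'\hat\delta + 2r_L|\rho_n|_1.
\]
The target then becomes to show the left-hand side is $O_p(r_L)$.

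Next, I decompose the ``noise'' vector as $\hat M - \hat Q_\ell \rho_n = E_1 + E_2$ with $E_1 := (\hat M - M) - (\hat Q_\ell - Q)\rho_n$ and $E_2 := M - Q\rho_n$. For $E_1$, Lemma \ref{lemma:q_and_m_convergence} together with an analogous kernel-free Bernstein bound for $|\hat M - M|_\infty$ gives $|E_1|_\infty = O_p(\epsilon_n)$, and then H\"older yields $|E_1'\hat\delta| \le O_p(\epsilon_n)|\hat\delta|_1$; since $\epsilon_n = o(r_L)$, on a high-probability event this is absorbable into $2r_L|\hat\rho_\ell|_1$ on the left. That absorption also bootstraps $|\hat\rho_\ell|_1 = O_p(1)$, which in turn controls $|\hat Q_\ell - Q|_\infty|\hat\delta|_1^2 = o_p(r_L)$ and lets me replace $\hat Q_\ell$ by $Q$ in the quadratic form at negligible cost. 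The delicate piece is $E_2 = \E[b(X)f(t|X)(\alpha_0 - b'\rho_n)]$: a sup-norm bound would only give $|E_2|_\infty = O(\sqrt{\epsilon_n})$, which need not be smaller than $r_L$. Instead I apply Cauchy--Schwarz directly to the inner product,
\[
|E_2'\hat\delta| = \bigl|\E[(b'\hat\delta)\,f(t|X)(\alpha_0 - b'\rho_n)]\bigr| \;\le\; \sqrt{C_f}\,\sqrt{\hat\delta'Q\hat\delta}\,\cdot\,\|\alpha_0 - b'\rho_n\|,
\]
and use $\|\alpha_0 - b'\rho_n\| = O(\sqrt{\epsilon_n})$ from Assumption \ref{assump:alpha_slow_rate} together with AM--GM to split this as $\tfrac12\hat\delta'Q\hat\delta + O(\epsilon_n)$, absorbing the quadratic half into the left-hand side.

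Putting the pieces together yields $\hat\delta'Q\hat\delta = O_p(r_L)$. By the overlap condition (Assumption \ref{assump:identification}) one has $f(t|X) \ge c > 0$, so $c\|b'\hat\delta\|^2 \le \hat\delta'Q\hat\delta = O_p(r_L)$, i.e.\ $\|b'\hat\delta\| = O_p(\sqrt{r_L})$. A final triangle inequality then gives
\[
\|\hat\alpha_\ell - \alpha_0\| \;\le\; \|b'\hat\delta\| + \|b'\rho_n - \alpha_0\| \;=\; O_p(\sqrt{r_L}) + O(\sqrt{\epsilon_n}) \;=\; O_p(\sqrt{r_L}),
\]
using $\epsilon_n = o(r_L)$. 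The main obstacle is handling $E_2$: the naive sup-norm bound is too loose, and the key idea is to use Cauchy--Schwarz in the $Q$-inner-product and then AM--GM so that the root factor $\sqrt{\hat\delta'Q\hat\delta}$ is absorbed back into the left-hand side, rather than being bounded $\ell_1$-against-sup-norm. A secondary technical point is the unweighted concentration $|\hat M - M|_\infty = O_p(\sqrt{\ln(p)/n})$, which is not stated in the excerpt but follows from the same Bernstein-type argument as Lemma \ref{lemma:q_and_m_convergence} without the kernel weights.
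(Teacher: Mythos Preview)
Your proposal is correct and is, in substance, exactly the argument behind Theorem 1 of \cite{chernozhukov2018automatic}. The paper's own proof does not reproduce any of this machinery: it simply verifies that Assumption \ref{assump:bounded_basis}, the rate $\epsilon_n$ from Lemma \ref{lemma:q_and_m_convergence}, and the approximation condition of Assumption \ref{assump:alpha_slow_rate} match the three hypotheses of that theorem, and then invokes the conclusion directly. So the two ``proofs'' are the same mathematics at different levels of abstraction: the paper black-boxes the slow-rate Lasso argument, while you open the box and carry out the basic-inequality, H\"older-on-$E_1$, Cauchy--Schwarz/AM--GM-on-$E_2$ steps explicitly.

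Two minor remarks on your write-up. First, your ``bootstrap $|\hat\rho_\ell|_1 = O_p(1)$'' step and the $\hat Q_\ell \leftrightarrow Q$ swap are slightly entangled: after absorbing $E_1$ you still have $2|E_2'\hat\delta|$ on the right, and your CS/AM--GM bound for $E_2$ introduces $\hat\delta'Q\hat\delta$, which you then need to compare to $\hat\delta'\hat Q_\ell\hat\delta$ \emph{before} you have the $\ell_1$ control. The clean way (as in \cite{chernozhukov2018automatic}) is to use positive semidefiniteness of $\hat Q_\ell$ and run the quadratic-in-$|\hat\rho_\ell|_1$ inequality carefully; the conclusion is the same but the ordering of steps matters. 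Second, the Cauchy--Schwarz bound $|E_2'\hat\delta| \le \sqrt{C_f}\sqrt{\hat\delta'Q\hat\delta}\,\|\alpha_0-b'\rho_n\|$ uses an \emph{upper} bound $C_f$ on $f(t|X)$, which the paper does not state explicitly (Assumption \ref{assump:identification} only gives the lower bound); you should flag this as an additional regularity condition.
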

    
The assumption that $\epsilon_n = o(r_L)$ means that our regularization parameter $r_L$ must go to zero slightly slower than $\epsilon_n$, which in our case $\epsilon_n  = O_p(\sqrt{\frac{\ln(p)}{hn}} + h^2)$.  One way to impose that $\epsilon_n = o(r_L)$ is to set $r_L$ proportional to $\epsilon_n\ln(\ln(n))$ in large samples. The $\ln(\ln(n))$ term does not impact the rates. 

Hence  

   \begin{equation}
        \|\hat{\alpha}_{\ell(t, X)} - \alpha_0(t, X) \| = O_p(\sqrt{(\sqrt{\frac{\ln(p)}{hn}} + h^2)\ln(\ln(n))})
    \end{equation}

The proof for Lemma \ref{lemma:alpha_convergence_slow} is given in Appendix \ref{proof:alpha_convergence_fast}

\begin{lemma}
\label{lemma:alpha_convergence_fast} (Sparse regime)
If assumption \ref{assump:alpha_fast_rate} holds, then we get a faster rate
        \begin{equation}
        \label{eq:alpha_convergence_fast}
        \|\hat{\alpha}_{\ell}(t, X) - \alpha_0(t, X) \| = O_p(\epsilon_{n}^\frac{-1 \xi}{(1 + 2 \xi)}r_L)
    \end{equation}
\end{lemma}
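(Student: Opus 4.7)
The plan is to follow the standard Lasso oracle inequality strategy (in the spirit of \cite{bickel2009simultaneous} and its Auto-DML adaptation in \cite{chernozhukov2018automatic}), carefully tracking how the kernel weighting in $\hat Q$ interacts with the $\epsilon_n = \sqrt{\ln(p)/(hn)} + h^2$ rate from Lemma~\ref{lemma:q_and_m_convergence}, and then optimizing over the oracle sparsity level $\bar s$ that Assumption~\ref{assump:alpha_fast_rate} provides.

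First, pick any $\bar s$ in the admissible range of Assumption~\ref{assump:alpha_fast_rate} and let $\bar\rho$ be the corresponding $\bar s$-sparse oracle vector with $\|\alpha_0 - b'\bar\rho\| \le C\bar s^{-\xi}$. Setting $\delta = \hat\rho_t - \bar\rho$ and using the first-order optimality of $\hat\rho_t$ for the penalized quadratic, I would derive the basic inequality
\begin{equation*}
\delta'\hat Q\,\delta \;\le\; 2\,(\hat M - \hat Q\bar\rho)'\delta \;+\; 2r_L\bigl(|\bar\rho|_1 - |\hat\rho_t|_1\bigr).
\end{equation*}
Second, I would show that on a high-probability event, $|\hat M - \hat Q\bar\rho|_\infty \le r_L/2$. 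This splits as $|\hat M - M|_\infty$ (controlled by a standard Bernstein/Hoeffding bound under Assumption~\ref{assump:bounded_basis}, at rate $\sqrt{\ln(p)/n}$), $|(\hat Q - Q)\bar\rho|_\infty \le |\hat Q - Q|_\infty |\bar\rho|_1 = O_p(\epsilon_n)$ by Lemma~\ref{lemma:q_and_m_convergence} together with $|\bar\rho|_1$ being bounded, and the population bias $|M - Q\bar\rho|_\infty$, which is driven by $\|\alpha_0 - b'\bar\rho\|$ via the bounded dictionary and can be absorbed into $\epsilon_n$ provided the approximation error is not larger than $r_L$. Since by Assumption~\ref{assump:regularization} $\epsilon_n = o(r_L)$, this event holds w.p.\ $\to 1$.

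Third, on this event the basic inequality yields the usual cone restriction: $|\delta_{\mathcal J^c}|_1 \le k\,|\delta_{\mathcal J}|_1$ with $k>3$, where $\mathcal J$ is the support of $\bar\rho$. Applying the population restricted eigenvalue condition from Assumption~\ref{assump:alpha_fast_rate}(3) gives $\delta'Q\delta \ge \kappa\,|\delta_{\mathcal J}|_1^2/\bar s$. To pass from $Q$ to $\hat Q$ I would use the remainder bound $|\delta'(\hat Q-Q)\delta| \le |\hat Q-Q|_\infty |\delta|_1^2$, and note that on the cone $|\delta|_1 \le (1+k)|\delta_{\mathcal J}|_1$, so the discrepancy is of smaller order than $\delta'Q\delta$ once we use $|\hat Q - Q|_\infty = O_p(\epsilon_n)$ and the restriction $\bar s \,\epsilon_n \to 0$ built into the admissible range for $\bar s$.

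Fourth, combine the two sides of the basic inequality to extract a bound of the form $|\delta|_1 \lesssim \bar s\, r_L$ and $\delta'Q\delta \lesssim \bar s\, r_L^2$. Using the bounded eigenvalue statement in Assumption~\ref{assump:alpha_fast_rate}(2), $\|b'\delta\|^2 = \delta' Q_0 \delta$ (with $Q_0 = \E[b b']$) can be controlled by $\delta'Q\delta$ up to a constant, and then the triangle inequality gives
\begin{equation*}
\|\hat\alpha_\ell(t,\cdot) - \alpha_0(t,\cdot)\| \;\le\; \|b'\delta\| + \|\alpha_0 - b'\bar\rho\| \;\lesssim\; \sqrt{\bar s}\,r_L + \bar s^{-\xi}.
\end{equation*}
Finally, I would optimize by taking $\bar s \asymp \epsilon_n^{-2/(1+2\xi)}$, which is the largest sparsity allowed by Assumption~\ref{assump:alpha_fast_rate}(1), and this yields the stated rate $O_p\!\bigl(\epsilon_n^{-\xi/(1+2\xi)} r_L\bigr)$ after matching exponents.

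The main obstacle I expect is the transfer of the restricted eigenvalue condition from the population matrix $Q$ to the empirical, kernel-weighted $\hat Q$: the kernel makes $\hat Q$ effectively a local sample average whose tails deteriorate as $h \to 0$, so verifying that the cone condition survives this swap (and that the residual term $|\delta'(\hat Q - Q)\delta|$ really is lower order on the admissible range of $\bar s$) is the delicate step. The rest of the argument is a routine adaptation of the Lasso oracle inequality with an extra $h^2$ bias term floating through the $\epsilon_n$.
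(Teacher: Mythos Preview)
Your proposal is correct and is essentially the paper's approach unpacked: the paper's own proof is a two-line verification that Lemma~\ref{lemma:q_and_m_convergence} and Assumptions~\ref{assump:bounded_basis} and~\ref{assump:alpha_fast_rate} supply exactly the hypotheses of Theorem~3 in \cite{chernozhukov2018automatic}, which is then cited directly for the conclusion. What you have written is, in effect, a sketch of the proof of that theorem specialized to the kernel-weighted $\hat Q$, so you could shorten your argument considerably by invoking the citation once you have checked those hypotheses.
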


The proof for Lemma \ref{lemma:alpha_convergence_fast} is given in Appendix \ref{proof:alpha_convergence_fast}

\begin{assumption} 
\label{assump:regressionrate}(Regression rate)

For each  $\ell=1,...,L \text{ and for any } t \in \mathcal{T}$  

\begin{equation}
    \| \hat{\gamma}(t, X) - \gamma_0(t, X)\| = O_p(n^{-d_{\gamma}})
\end{equation}

\begin{enumerate}
    \item in the dense regime, $d_{\gamma} \in (\frac{1}{5}, \frac{1}{2})$ 
    \item in the sparse regime, $d_\gamma \in (\frac{4}{ 10} - \frac{4 \xi}{5 + 10 \xi}, \frac{1}{2})$
\end{enumerate}
\end{assumption}

These regime-specific bounds on $d_{\gamma}$ are sufficient conditions for the DML product condition given in Corollary \ref{corollary:rate_condition_interaction}. This controls the interaction remainder 

\begin{corollary} \label{corollary:rate_condition_interaction}
Under assumptions  \ref{assump:regularization}, \ref{assump:regressionrate} and - either assumption \ref{assump:alpha_slow_rate} OR \ref{assump:alpha_fast_rate}

\begin{equation}
    \|\hat{\alpha}(t, X) - \alpha_0(t, X)\| \|\hat{\gamma}(t, X) - \gamma_0(t, X)\| = o_p((hn)^{-\frac{1}{2}})
\end{equation}
\end{corollary}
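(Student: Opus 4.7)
The plan is to combine the $\mathcal{L}_{2}$ rates from Lemmas~\ref{lemma:alpha_convergence_slow} and \ref{lemma:alpha_convergence_fast} with the regression rate in Assumption~\ref{assump:regressionrate}, and then to verify in each of the two regimes that the product is dominated by $(hn)^{-1/2}$. Throughout, I work with the bandwidth choice $h \asymp n^{-1/5}$ that balances the two terms in $\epsilon_{n}=\sqrt{\ln(p)/(hn)}+h^{2}$; up to logarithmic factors this gives $\epsilon_{n}\asymp n^{-2/5}$ and $(hn)^{-1/2}\asymp n^{-2/5}$. Under Assumption~\ref{assump:regularization} one also has $r_{L}\asymp \epsilon_{n}\ln\ln n$, so $r_{L}$ and $\epsilon_{n}$ are of the same polynomial order.

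In the dense regime, Lemma~\ref{lemma:alpha_convergence_slow} yields $\|\hat{\alpha}-\alpha_{0}\|=O_{p}(\sqrt{r_{L}})$. Combined with $\|\hat{\gamma}-\gamma_{0}\|=O_{p}(n^{-d_{\gamma}})$ from Assumption~\ref{assump:regressionrate}, the product is $O_{p}(\sqrt{r_{L}}\,n^{-d_{\gamma}})$. Plugging in $r_{L}\asymp n^{-2/5}$ (up to logs) gives the polynomial bound $n^{-1/5-d_{\gamma}}$. The goal $o_{p}((hn)^{-1/2})=o_{p}(n^{-2/5})$ therefore reduces to the inequality $\tfrac{1}{5}+d_{\gamma}>\tfrac{2}{5}$, i.e.\ $d_{\gamma}>\tfrac{1}{5}$, which is precisely the range assumed in the dense part of Assumption~\ref{assump:regressionrate}. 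The $\ln\ln n$ factor coming through $r_{L}$ is absorbed by the strict inequality $d_{\gamma}>1/5$.

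In the sparse regime, Lemma~\ref{lemma:alpha_convergence_fast} gives $\|\hat{\alpha}-\alpha_{0}\|=O_{p}\!\bigl(\epsilon_{n}^{-1/(1+2\xi)} r_{L}\bigr)$ (reading the exponent $-1/(1+2\xi)$, which is the standard $\sqrt{s}\,r_{L}$ bound applied with the sparsity level $s\asymp \epsilon_{n}^{-2/(1+2\xi)}$ built into Assumption~\ref{assump:alpha_fast_rate}). Combined with the regression rate, the product is $O_{p}\!\bigl(\epsilon_{n}^{1-1/(1+2\xi)}\,n^{-d_{\gamma}}\bigr)=O_{p}\!\bigl(\epsilon_{n}^{2\xi/(1+2\xi)}\,n^{-d_{\gamma}}\bigr)$, again up to the benign $\ln\ln n$ factor from $r_{L}$. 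Substituting $\epsilon_{n}\asymp n^{-2/5}$ this becomes $n^{-4\xi/(5+10\xi)-d_{\gamma}}$, and the required condition $o_{p}(n^{-2/5})$ collapses to
\begin{equation*}
d_{\gamma}\;>\;\tfrac{2}{5}-\tfrac{4\xi}{5+10\xi}\;=\;\tfrac{4}{10}-\tfrac{4\xi}{5+10\xi},
\end{equation*}
which is exactly the sparse range stated in Assumption~\ref{assump:regressionrate}.

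The main obstacle is bookkeeping rather than any deep argument: the rate $\epsilon_{n}$ mixes a variance-type term $\sqrt{\ln(p)/(hn)}$ with a bias-type term $h^{2}$, so the balancing at $h\asymp n^{-1/5}$ must be verified (and the corollary implicitly requires this choice, or any sequence of $h$ such that both $h\to 0$ and $hn\to\infty$ hold at compatible rates). One also has to check that the $\ln(\ln n)$ factor in $r_{L}$ is genuinely slowly varying and can be absorbed by the strict inequalities on $d_{\gamma}$, and that the same holds for the $\ln(p)$ terms under $\ln(p)=o(n^{c})$ for small $c$. Once these are in place, the two displays above directly yield $\|\hat{\alpha}-\alpha_{0}\|\,\|\hat{\gamma}-\gamma_{0}\|=o_{p}((hn)^{-1/2})$ in both regimes, completing the proof.
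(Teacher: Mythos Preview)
Your proposal is correct and follows essentially the same route as the paper's own proof: in each regime you plug in the $\cL_2$ rate for $\hat{\alpha}$ from the relevant lemma, combine it with the assumed $n^{-d_\gamma}$ rate for $\hat{\gamma}$, substitute $h\asymp n^{-1/5}$ so that $\epsilon_n\asymp n^{-2/5}$ and $(hn)^{-1/2}\asymp n^{-2/5}$, and reduce the desired $o_p$ statement to the polynomial inequality on $d_\gamma$ that is exactly the content of Assumption~\ref{assump:regressionrate}. Your handling of the $\ln\ln n$ and $\ln p$ factors via the strict inequalities on $d_\gamma$ is the same absorption the paper performs implicitly.
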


 Corollary \ref{corollary:rate_condition_interaction} is proven in Appendix \ref{proof:rate_condition_interaction}. In this corollary we see the trade off in the error permitted in estimating our two nuisance parameters - the regression function $\gamma_0(t, X)$ and the MIGPS $\alpha_0(t,X)$.

\subsection{Stage 2} \label{subsection:stage_2}

To prove the normality of our estimator we follow the structure of Lemma 15 of \cite{chernozhukov2016locally}.

\begin{theorem}(asymptotic linearity) 
\label{theor:asymptotic_linearity}
Given Assumptions \ref{assump:identification}, \ref{assump:bounded_basis}, and either \ref{assump:alpha_slow_rate} or \ref{assump:alpha_fast_rate}, and  \ref{assump:regularization}, \ref{assump:regressionrate}
\begin{equation}
\sqrt{nh}(\hat{\psi}(\beta_0)) = \sqrt{\frac{h}{n}} \sum_{i = 1}^n {\psi}(W, \beta_0, \gamma_0, \alpha_0) + o_p(1)       
\end{equation}
\end{theorem}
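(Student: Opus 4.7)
The plan is to establish asymptotic linearity through the standard cross-fitted error decomposition, modified for kernel localization. Writing $\hat\psi_i = \psi(W_i,\beta_0,\hat\gamma_\ell,\hat\alpha_\ell)$ and $\psi_i = \psi(W_i,\beta_0,\gamma_0,\alpha_0)$ for $i \in W_\ell$, the goal is to show $\sqrt{h/n}\sum_{\ell=1}^L\sum_{i\in W_\ell}(\hat\psi_i-\psi_i) = o_p(1)$. A direct algebraic identity gives the three-term split
\begin{align*}
\hat\psi_i - \psi_i &= [1 - K_h(T_i-t)\alpha_0(t,X_i)]\,(\hat\gamma_\ell-\gamma_0)(t,X_i) \\
&\quad + K_h(T_i-t)(\hat\alpha_\ell-\alpha_0)(t,X_i)(Y_i-\gamma_0(t,X_i)) \\
&\quad - K_h(T_i-t)(\hat\alpha_\ell-\alpha_0)(t,X_i)(\hat\gamma_\ell-\gamma_0)(t,X_i);
\end{align*}
call the three summands $A_i,B_i,C_i$. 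Because cross-fitting makes $(\hat\gamma_\ell,\hat\alpha_\ell)$ independent of $\{W_i:i\in W_\ell\}$, I would condition on $W_\ell^c$ and apply a conditional Chebyshev/Markov argument to each of $\sqrt{h/n}\sum_{i\in W_\ell}A_i$, $\sqrt{h/n}\sum_{i\in W_\ell}B_i$, and $\sqrt{h/n}\sum_{i\in W_\ell}C_i$ separately, then sum over the finite number of folds $L$ at the end.

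The conditional means of $A_i$ and $B_i$ encode Neyman (near-)orthogonality. For $A_i$, the change of variables $u=(T-t)/h$ and a Taylor expansion of $f(\cdot|X)$ around $t$ give $\E[K_h(T-t)\alpha_0(t,X)\mid X] = 1 + O(h^2)$, so $|\E[A_i\mid W_\ell^c]| \lesssim h^2\|\hat\gamma_\ell-\gamma_0\|$ and the scaled contribution $\sqrt{nh}\cdot h^2 \|\hat\gamma_\ell - \gamma_0\|$ is $o_p(1)$ under Assumption \ref{assump:regressionrate} with an undersmoothed bandwidth. For $B_i$, iterated expectations with $\E[Y\mid T,X]=\gamma_0(T,X)$ reduce the conditional mean to the kernel-smoothing bias $\E[K_h(T-t)(\hat\alpha_\ell-\alpha_0)(\gamma_0(T,X)-\gamma_0(t,X))] = O(h^2)\|\hat\alpha_\ell-\alpha_0\|$ after another Taylor expansion of $\gamma_0(\cdot,X)$ at $t$. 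In both cases the conditional variance calculation uses $\int K_h^2 = R(K)/h$, which the leading $\sqrt{h}$ absorbs to leave an $O(\|\cdot\|^2)$ bound that vanishes by Lemmas \ref{lemma:alpha_convergence_slow}--\ref{lemma:alpha_convergence_fast} and Assumption \ref{assump:regressionrate}.

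Term $C_i$ is the product remainder that DML is designed to kill. By Cauchy--Schwarz together with the bounded-density identity $\E[K_h(T-t)g(X)]=\E[f(t|X)g(X)](1+O(h^2))$,
\[
|\E[C_i\mid W_\ell^c]| \leq \big\{\E[K_h(T-t)(\hat\alpha_\ell-\alpha_0)^2]\big\}^{1/2}\big\{\E[K_h(T-t)(\hat\gamma_\ell-\gamma_0)^2]\big\}^{1/2} \lesssim \|\hat\alpha_\ell-\alpha_0\|\,\|\hat\gamma_\ell-\gamma_0\|.
\]
The scaled sum has magnitude of order $\sqrt{nh}\,\|\hat\alpha_\ell-\alpha_0\|\|\hat\gamma_\ell-\gamma_0\|$, which is exactly $o_p(1)$ by Corollary \ref{corollary:rate_condition_interaction}; the parallel variance bound is no worse after absorbing one $K_h$ against the $\sqrt{h}$ factor.

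The hard part will be the bookkeeping of powers of $h$: every kernel contributes $1/h$ to second moments and the effective sample size is $nh$ rather than $n$, so the rates in Lemmas \ref{lemma:alpha_convergence_slow}--\ref{lemma:alpha_convergence_fast} (which already carry the $\sqrt{\ln p/(hn)}+h^2$ penalty) must be traded carefully against the $\sqrt{h}$ normalization. The most delicate single step is the variance bound for $B_i$: because the residual is $Y_i-\gamma_0(t,X_i)$ with $t$ fixed rather than $Y_i-\gamma_0(T_i,X_i)$, I would add and subtract $\gamma_0(T_i,X_i)$ and invoke continuity of $\gamma_0(\cdot,X)$ around $t$, together with a bounded conditional second moment of $Y$ given $(T,X)$, to keep the tail from leaking into the normalizing constant.
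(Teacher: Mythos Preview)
Your proposal is correct and follows essentially the same approach as the paper. The paper's six-term remainder decomposition $R_1,R_2,R_3,\Delta,E_{1+2},E_3$ is precisely your three-term split with each piece further separated into its conditional mean and centered part: $A_i$ becomes $R_1+R_2+E_{1+2}$, $B_i$ becomes $R_3+E_3$, and $C_i$ is exactly $\Delta$; both arguments then bound the means via the $O(h^2)$ kernel-smoothing bias together with Corollary~\ref{corollary:rate_condition_interaction}, and the centered parts via conditional Chebyshev with the $h^{-1}$ from $\int K_h^2$ absorbed by the $\sqrt{h}$ normalization.
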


Proofs following the structure of Lemma 15 from \cite{chernozhukov2016locally} require three different types of assumptions in order to prove asymptotic normality require three types of assumptions. First are the mild mean square consistency conditions  - which are satisfied given the rate conditions given in Lemmas \ref{lemma:alpha_convergence_slow} and  \ref{lemma:alpha_convergence_fast} Assumption \ref{assump:regressionrate}. Second is an assumption that the controls the interaction of the nuisance parameters, that is controlled by Corollary \ref{corollary:rate_condition_interaction}. Lastly, there is an assumption that controls that average of the double robustness term $\psi$ n our case as $h$ goes to zero this reminder goes to zero

\begin{theorem}(asymptotic normality) 
\label{theor:asymptotic_normality}
Let the same assumptions hold. Let $h \rightarrow 0$, $nh \rightarrow \infty$

\begin{equation}
\sqrt{nh}(\hat{\beta}_t - \beta_{0t} - h^2 B_t) \xrightarrow[]{d} N(0, V), \quad \hat{V}_t \xrightarrow[]{p} V_t 
\end{equation}

Where 

\begin{equation}
    V_t = \E[var[\gamma(t, X)]/f(t|X)]\int K(u)^2d_u
\end{equation}

and 
\begin{equation}
    B_t = \E[\frac{1}{2} \frac{\partial^2 }{\partial t^2} \E[Y| T = t, X] + \frac{\partial}{\partial t} \E[Y| T = t, X] \frac{\partial}{\partial t}f_{T|X}(t|X)/f_{T|X}(t|X) ] \int_{- \infty}^{\infty} u^2 k(u)d_u
\end{equation}

\end{theorem}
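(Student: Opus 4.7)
The plan is to leverage Theorem \ref{theor:asymptotic_linearity} and then perform a bias--variance decomposition of the oracle moment. Since $\hat\beta_t-\beta_0=\tfrac{1}{n}\sum_i\hat\psi_i(\beta_0)$ by construction, Theorem \ref{theor:asymptotic_linearity} rewrites as
\[\sqrt{nh}(\hat\beta_t-\beta_0)=\sqrt{\tfrac{h}{n}}\sum_{i=1}^{n}\psi(W_i,\beta_0,\gamma_0,\alpha_0)+o_p(1).\]
I would split the oracle sum into the deterministic mean $\sqrt{nh}\,\E[\psi]$ and the centered sum $\sqrt{h/n}\sum_i(\psi_i-\E[\psi_i])$, arguing that the first equals $\sqrt{nh}\cdot h^2 B_t+o(1)$ and the second converges weakly to $N(0,V_t)$.

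For the mean, $\E[\gamma_0(t,X)-\beta_0]=0$ by definition of $\beta_0$, so only the debiasing piece contributes. Conditioning on $X$, substituting $\alpha_0(t,X)=1/f_{T|X}(t|X)$, and changing variables $u=(T-t)/h$ gives
\[\E[\psi]=\E_X\!\left[\frac{1}{f(t|X)}\int K(u)\,f(t+hu|X)\bigl(\gamma_0(t+hu,X)-\gamma_0(t,X)\bigr)du\right].\]
A second--order Taylor expansion in $h$ about $u=0$, combined with $\int K(u)du=1$ and $\int uK(u)du=0$ for a symmetric kernel, collects the $h^2$ contribution as precisely $B_t\int u^2 K(u)du$; the leftover is $O(h^3)$ (or even $O(h^4)$ using $\int u^3 K=0$), so under the bandwidth regime $\sqrt{nh}\bigl(\E[\psi]-h^2 B_t\bigr)=o(1)$.

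Since $\psi_i=\psi(W_i;h_n)$ depends on $n$ through the bandwidth, the centered sum calls for a Lindeberg--Feller CLT for triangular arrays. The key moment computation is
\[h\,\var(\psi_i)=h\,\E\bigl[K_h(T-t)^2\alpha_0(t,X)^2(Y-\gamma_0(t,X))^2\bigr]+o(1),\]
since the $\gamma_0(t,X)-\beta_0$ term contributes $O(h)$ and the cross term is negligible. Changing variables to $u=(T-t)/h$, using iterated expectations, and sending $h\to 0$ yields $h\,\var(\psi_i)\to \int K(u)^2du\cdot \E[\var(Y|T=t,X)/f(t|X)]=V_t$. The Lindeberg condition follows from the envelope $|\psi_i|=O_p(1/h)$ together with boundedness of $K$ and standard moment conditions on $Y$: the truncated second moment at level $\eta(nh)^{1/2}/(h/n)^{1/2}$ vanishes because $nh\to\infty$.

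Consistency of $\hat V_t$ is obtained by a triangle argument. I would first replace $(\hat\gamma_\ell,\hat\alpha_\ell)$ in $\hat V_t$ by the oracle $(\gamma_0,\alpha_0)$ using the $\cL_2$ rates from Lemmas \ref{lemma:alpha_convergence_slow}--\ref{lemma:alpha_convergence_fast} and Assumption \ref{assump:regressionrate}, with cross--fitting controlling the interaction remainder via Corollary \ref{corollary:rate_condition_interaction}; then show $\tfrac{h}{n}\sum_i\psi_i^2\xrightarrow{p}\E[h\psi^2]$ by a LLN for triangular arrays (the variance of the average is $O((nh)^{-1})=o(1)$). The main obstacle is the triangular--array CLT step: because the summands blow up like $1/h$ as $h\to 0$, the Lindeberg condition must be checked carefully and the rate $nh\to\infty$ is exactly what guarantees that the tail contribution vanishes. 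A secondary technicality is that controlling the bias remainder beyond $h^2 B_t$ requires additional smoothness of $\gamma_0(t,X)$ and $f_{T|X}(t|X)$ in $t$ that is not explicitly listed in the theorem statement but would need to be imposed formally.
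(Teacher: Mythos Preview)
Your proposal is correct and follows essentially the same route as the paper: invoke Theorem \ref{theor:asymptotic_linearity} to reduce to the oracle sum, compute the $h^2 B_t$ bias via a change of variables and second-order Taylor expansion (the paper's Lemma \ref{lemma:asymptotic_bias}), compute $h\,\var(\psi_i)\to V_t$ (the paper's Lemma \ref{lemma:asymptotic_var}), and conclude by the CLT. If anything, you are more careful than the paper, which simply says ``normality follows from the central limit theorem'' without naming the triangular-array Lindeberg--Feller version or checking the Lindeberg condition; your explicit treatment of the $h$-dependence of $\psi_i$ and the $nh\to\infty$ tail argument fills a gap the paper leaves implicit.
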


The $B_t$ is a bias term due to the fact the density estimators have with a non-zero asymptotic bias when one uses an optimal bandwidth $h$ - as explained in section 2.14 of \cite{hansen2009lecture}. We could make $B_t$ smaller by selecting a sub-optimal bandwidth, but then our estimator would have a slower convergence rate, so we will avoid this. We derive the asymptotic bias and variance terms in Lemmas \ref{lemma:asymptotic_bias} and \ref{lemma:asymptotic_var} in the appendix.

\section{Numerical examples} \label{section:numerical_examples}

\subsection{Simulation study}  \label{subsection:simulation study}

In this section we give simulation results for our estimator, and compare our estimator to the estimator proposed by \cite{colangelo2020double}. In their paper \cite{colangelo2020double} give simulation results to showcase the performance of their estimator under a specific DGP, and we use the same setup for our simulations. 

\subsubsection{DGP}

Consider two independent standard normal noise variables $\nu$ and $\epsilon$

\begin{equation}
    \nu \sim  N (0, 1), \quad \epsilon \sim N (0, 1) \quad
\end{equation}

Create 100 covariates $X$, and which are distributed normally also with mean zero and standard deviation 1, and are correlated with another according to the covariance matrix $\Sigma$. The $diag(\Sigma) = 1$, and $(i,j)$-entry $\Sigma_{ij} = 0.5$ for $|i-j| = 1$ and $\Sigma_{ij} = 0$ for $|i-j| > 1$ for $i,j = 1,\cdots,100$ 

\begin{equation}
    X=(X_1,\cdots,X_{100})'\sim N(0,\Sigma)
\end{equation}

Treatment variable $T$ is a function of noise $\nu$, the covariates $X$, and a vector of parameters $\theta$. The $j-th$ element of $\theta$ is $\theta_j = 1/j^2$. $\Phi$ denotes the CDF of $N(0,1)$. 
 
 \begin{equation}
     T=\Phi(3X'\theta)+0.75 \nu 
 \end{equation}

This means that our generalized propensity score is defined as
\begin{equation}
    f(t| X = x) = \frac{1}{.75 \sqrt{2\pi}}e^{-\frac{1}{2} (\frac{T - \Phi(3X'\theta)}{.75})^{2}}.
\end{equation}

Lastly our outcome variable,
\begin{equation}
    Y=1.2T+1.2X'\theta+T^2+TX_1 + \epsilon
\end{equation}

Thus the potential outcome $Y(t) = 1.2t+1.2X'\theta+ t^2 + tX_1 + \epsilon$. The parameter of interest in the simulations is the average dose response function at $t = 0$, i.e., $\beta_0 = E[Y(0)] = 0$.

We want to point out an important detail about the DGP. Because $\Phi$ is bounded between the values of 0 and 1, our $f(t| X = x)$ can only vary between 0.22 and 0.53. Therefore we expect estimators to perform well, since there are no small propensity scores that have to be estimated precisely in order to prevent large bias. 

The estimator from \cite{colangelo2020double} is 
\begin{equation}
    \hat{\beta}_t = \frac{1}{n} \sum_{\ell=1}^{L} \sum_{i \in \ell} \hat{\gamma}_{\ell}(t,X_i) +  K_h(T_i - t) {\frac{1}{\hat{f}_{T|X}(t|X_i)_{\ell}}}(Y_i - \hat{\gamma}_{\ell}(t, X_i))
\end{equation}

In comparison to ours 
\begin{equation}
       \hat{\beta}_t = \frac{1}{n} \sum_{\ell=1}^{L} \sum_{i \in \ell} \hat{\gamma}_{\ell}(t,X_i) +  K_h(T_i - t) \hat{\alpha_{\ell}}(t, X_i)(Y_i - \hat{\gamma}_{\ell}(t, X_i))
\end{equation}

A very key component to both of these estimators is the kernel $K$. We present results using both epanechnikov and gaussian kernels. We used the code from \cite{colangelo2020double} papers to produce Table 1. 

The first column gives the samples size of the simulation $n$. The second column gives the number of folds that were used in cross fitting $L$. Third column gives $c_h$ the size of the bin-width parameter used for kernel. The bin-width formula is given in equation \eqref{eq:bin-width}. In this equation $\sigma_T$ is the standard deviation of the treatment variable.   
\begin{equation}
\label{eq:bin-width}
h = c_h \sigma_T n^{-0.2}
\end{equation}

We can see that the table using the epanechnikov kernel in comparison to the gaussian kernel leads to lower bias and root mean square error (RMSE), and higher coverage. However, since the epanechnikov kernel is bounded, it is doing trimming implicitly, which as we discussed in the introduction is something we want to avoid as we could possibly lead to inference on a different population. The results of the \cite{colangelo2020double} estimator with a gaussian kernel are given in Table \ref{table:gaussian_sims_ying_ying}. We see that with high $c_h$ the estimator with the gaussian kernel does particularly badly - with the estimator losing proper coverage.

The simulation results for our estimator are given in Table \ref{table:gaussian_sims}. We use the gaussian kernel in order to avoid implicit trimming.  The $c_h$ is the kernel bin-width parameter like above. The $c_l$ is a parameter for the lasso regularization parameter $r_L$; the formula for $r_L$ is given by
 
\begin{equation}
\label{eq:regularization_parameter}
    r_L = c_l \Phi^{-1}(1 - .1/(2p))/ \sqrt{n}
\end{equation}

\begin{table}[h]
    \begin{subtable}[h]{0.45\textwidth}
    \label{table:epa_sims}
        \centering
\begin{tabular}{|c|c|c|l|l|l|}
\hline
\textbf{n} & \textbf{L} & \textbf{c}$_h$ & Bias & RMSE & Coverage \\ \hline
\multirow{5}{*}{\textbf{500}} & \multirow{5}{*}{\textbf{5}} & \textbf{0.5} & -0.111 & 2.737 & 0.959 \\ \cline{3-6} 
 &  & \textbf{1} & 0.029 & 0.135 & 0.943 \\ \cline{3-6} 
 &  & \textbf{1.5} & 0.053 & 0.125 & 0.938 \\ \hline
\multirow{5}{*}{\textbf{1000}} & \multirow{5}{*}{\textbf{5}} & \textbf{0.5} & 0.001 & 0.134 & 0.938 \\ \cline{3-6} 
 &  & \textbf{1} & 0.013 & 0.099 & 0.933 \\ \cline{3-6} 
 &  & \textbf{1.5} & 0.031 & 0.092 & 0.933 \\ \hline
\end{tabular}
       \caption{Epanechnikov kernel}
       \label{table:gaussian_sims_ying_ying}
    \end{subtable}
    \hfill
    \begin{subtable}[h]{0.45\textwidth}
        \centering
        \begin{tabular}{|c|c|c|l|l|l|}
\hline
\textbf{n} & \textbf{L} & \textbf{c}$_h$ & \multicolumn{1}{c|}{{Bias}} & \multicolumn{1}{c|}{{RMSE}} & \multicolumn{1}{c|}{{Coverage}} \\ \hline
\multirow{5}{*}{\textbf{500}} & \multirow{5}{*}{\textbf{5}} & \textbf{0.5} & 0.114 & 6.953 & 0.970 \\ \cline{3-6} 
 &  & \textbf{1} & 0.072 & 0.233 & 0.866 \\ \cline{3-6} 
 &  & \textbf{1.5} & 0.166 & 0.224 & 0.548 \\ \hline
\multirow{5}{*}{\textbf{1000}} & \multirow{5}{*}{\textbf{5}} & \textbf{0.5} & -0.008 & 0.148 & 0.945 \\ \cline{3-6} 
 &  & \textbf{1} & 0.040 & 0.102 & 0.885 \\ \cline{3-6} 
 &  & \textbf{1.5} & 0.114 & 0.158 & 0.553 \\ \hline
\end{tabular}
        \caption{Gaussian kernel}
        \label{tab:week2}
     \end{subtable}
    \caption{Colangelo and Lee Simulations}
    \label{table:colaangelo_lee_sims}
\end{table}

\begin{table}[ht]
\footnotesize
\centering
\begin{tabular}{rrrrrrrrr}
  \hline
N & L & $c_h$ & $c_l$ & bias & rmse & coverage \\ 
  \hline
500 &  5 & 0.5 & 0.50 & 0.0176 & 0.1158 & 0.9484 \\ 
  500 &  5 & 0.5 & 0.75 & 0.0151 & 0.1221 & 0.9495 \\ 
  500 &  5 & 0.5 & 1.00 & 0.0278 & 0.1171 & 0.9435 \\ 
  500 &  5 & 0.5 & 1.25 & 0.0195 & 0.1163 & 0.9478 \\ 
  500 &  5 & 1.0 & 0.25 & -1.1232 & 19.6210  & 0.9484 \\ 
  500 &  5 & 1.0 & 0.50 & 0.1126 & 0.1693 &  0.8531 \\ 
  500 &  5 & 1.0 & 0.75 & 0.0977 & 0.1464 &  0.8535 \\ 
  500 &  5 & 1.0 & 1.00 & 0.1061 & 0.1409  & 0.7896 \\ 
  500 &  5 & 1.0 & 1.25 & 0.1004 & 0.1464 &0.8477 \\ 
  500 &  5 & 1.5 & 0.25 & 0.8785 & 16.1961 &  0.9465 \\ 
  500 &  5 & 1.5 & 0.75 & 0.2141 & 0.2372 &  0.4483 \\ 
  500 &  5 & 1.5 & 1.00 & 0.2098 & 0.2312 &  0.4258 \\ 
  500 &  5 & 1.5 & 1.25 & 0.2026 & 0.2275 &  0.5014 \\ 
  1000 &  5 & 0.5 & 0.25 & 0.0069 & 0.1046 & 0.9478 \\ 
  1000 &  5 & 0.5 & 0.50 & 0.0146 & 0.0971 &  0.9472 \\ 
  1000 &  5 & 0.5 & 0.75 & 0.0125 & 0.0901 &  0.9479 \\ 
  1000 &  5 & 0.5 & 1.00 & 0.0133 & 0.0900 &  0.9486 \\ 
  1000 &  5 & 0.5 & 1.25 & 0.0089 & 0.0897 &  0.9486 \\ 
  1000 &  5 & 1.0 & 0.25 & 0.0966 & 0.1607 &  0.8867 \\ 
  1000 &  5 & 1.0 & 0.50 & 0.0745 & 0.1059 & 0.8324 \\ 
  1000 &  5 & 1.0 & 0.75 & 0.0821 & 0.1142 & 0.8211 \\ 
  1000 &  5 & 1.0 & 1.00 & 0.0759 & 0.1036 & 0.8162 \\ 
  1000 &  5 & 1.0 & 1.25 & 0.0782 & 0.1059 &  0.8114 \\ 
  1000 &  5 & 1.5 & 0.25 & 0.2348 & 0.2468 & 0.1289 \\ 
  1000 &  5 & 1.5 & 0.50 & 0.2049 & 0.2166 & 0.1690 \\ 
  1000 &  5 & 1.5 & 0.75 & 0.1802 & 0.1948 &  0.3192 \\ 
  1000 &  5 & 1.5 & 1.00 & 0.1723 & 0.1854 &  0.2852 \\ 
  1000 &  5 & 1.5 & 1.25 & 0.1650 & 0.1774 &  0.2847 \\ 
   \hline
\end{tabular}
\caption{simulation results gaussian} 
\label{table:gaussian_sims}
\end{table}

 We see that our results are pretty sensitive to the choice of $c_l$, and we shall work to reduce this sensitivity in future versions. For the reasonable $c_l$ values in most cases our estimator improves upon the Colangelo and Lee estimator with a gaussian kernel. This is especially true for the simulations with the smaller sample size of n = 500 --- there the RMSE is cut in half. With the larger sample size of n=1000 our estimators can decrease RMSE by about a third. 

It is important to note the both estimators perform  badly with large bin-widths --- at $c_h$ = 1.5 we see that our estimators no longer have valid coverage. Therefore the choice of the bin-width is also very important for the performance of estimators of this type.

\section{Conclusion} \label{section:conclusion}

In this paper, we presented a new estimator of continuous treatment effects and proved that it is asymptotically normal. Our estimator uses a new debiasing method that draws from both the DML and ADML literatures. We estimate the MIGPS – which is the critical component of the debiasing term – directly, as opposed to current methods that estimate the propensity of treatment and then invert it. Our direct estimation theoretically gives us improved numerical stability and nice automatic balancing properties. Empirically in Monte Carlo simulations, we find that our estimator decreases RMSE – up to 50\% in certain specifications – compared to current methods. In future versions of our paper we plan to make three key improvements. First, on the theoretical side, we plan to explain more intuition about our estimator's balancing properties. Second, on the empirical side, we want to improve the procedure that selects the optimal regularization parameter. Lastly, we also want to apply the estimator to a real world empirical example - specifically we plant to estimate the spatial gender wage gap as was studied in \cite{liu_geography_2020}.

\bibliographystyle{apalike}
\bibliography{main}

\appendix

\section{Stage 1}

\subsection{MIGPS}  \label{subsection:weightedols}

\paragraph{Proof of Lemma \ref{lemma:approximate_weighted_ols}}

\begin{proof}

 We want to  find $\hat{\rho}$ such that 
 \begin{equation}
     \hat{\alpha} = b(X)\hat{\rho} = \frac{1}{f_{T|X}(t|X)} 
 \end{equation}

Solve the weighted ols problem weight $w$

\begin{equation}
  \hat{\rho} = \argmin \E[w(\frac{1}{f(t|X)} - \rho'b(X))^2] 
 \end{equation}

 \begin{equation}
  \hat{\rho} = \argmin - 2\rho \E[w\frac{b(X)}{f(t|X)}] + \rho' \E [w b(X)b(X)'] \rho
\end{equation}

Set $ w = K_h(T - t)$

Then 

\begin{align}
    M  &= \E(\frac{K_h(t - t)b(X)}{f(t|X)}) = 
    \int_X \int_T \frac{K_h(T - t)b(X)}{f(t|X)} f(X,T) d_X d_T  \\
    &=  \int_X \int_T \frac{K_h(T - t)b(X)}{f(t|X)} f(T| X) f(X) d_X d_T \\
    &=  \int_X \bigg( \int_T K_h(T - t)f(T| X)  d_T \bigg) \frac{b(X)}{f(t|X)}  f(X) d_X 
\end{align}

We have $$\bigg( \int_T K_h(T - t)f(T| X)  d_T \bigg)  =\E_{T|X}[K_h(T - t)] = \E[K_h(T - t)|X] = f(t|X) + R(h)$$

If we can argue that $R(h)$ is small then 

\begin{align}
    M  &= \int_X f(t|X) \frac{b(X)}{f(t|X)}  f(X) d_X \\
    &= \int_X  b(X) f(X) d_X \\
    &= \E(  b(X)) \\
\end{align}

We make a similar argument for $Q$

Then we get 

$$\hat{M} = \frac{1}{n} \sum_{i = 1}^{n} b(X_i)$$

$$\hat{Q} = \frac{1}{n} \sum_{i = 1}^{n} K_h(T_i - t) b(X_i) b(X_i)'$$ 

Then 

$$\hat{\alpha}_{L} = b(X_i)\hat{\rho}_{L} $$
\end{proof}

\paragraph{Proof of Lemma \ref{lemma:q_and_m_convergence}} 

\begin{proof}
\label{proof:q_and_m_convergence}

We want to show $$|\hat{Q} - Q|_{\infty} = O_p(\sqrt{\frac{\ln(p)}{hn}} + h^2)$$

By adding and subtracting terms we have 
\begin{align}
\hat{Q} - Q = \hat{Q} &- \frac{1}{n} \sum b(X)b(X)'\E(K_h(T- t|X))\\ 
&+ \frac{1}{n} \sum b(X)b(X)'\E(K_h(T- t|X)) - \frac{1}{n} \sum b(X)b(X)'f(t|X)\\ &+ \frac{1}{n} \sum b(X)b(X)'f(t|X)  - Q  
\end{align}

Applying triangle inequality

\begin{align}
|\hat{Q} - Q|_{\infty} &\leq 
\underbrace{|\hat{Q} - \frac{1}{n} \sum b(X)b(X)'\E(K_h(T- t|X))|_{\infty}}_\textrm{term 1} \\ 
&+ \underbrace{|\frac{1}{n} \sum b(X)b(X)'\E(K_h(T- t|X)) - \frac{1}{n} \sum b(X)b(X)'f(t|X)|_{\infty}}_\textrm{term 2}  \\
&+ \underbrace{|\frac{1}{n} \sum b(X)b(X)'f(t|X)  - Q|_{\infty}}_\textrm{term 3}
\end{align}

We will bound the three terms on the right hand side

\begin{enumerate}
    \item

    The bound for the first term will follow from an argument from Van der varrt. We will show that  
$$|\hat{Q} - \frac{1}{n} \sum b(X)b(X)'\E(K_h(T- t|X))|_{\infty} = O_p(\sqrt{\frac{\ln(p)}{hn}}) $$

Let define our matrix $$A = \frac{1}{n}\sum_{i=1}^nb(X_i)b(X_i)'\bigg(K_h(T_i - t)  - \E[K_h(T_i - t)|X_i]\bigg)$$. 

Matrix $A$ has elements 

$$A_{j,k} = \frac{1}{n}\sum_{i=1}^n b_j(X_i)b_k(X_i)'\bigg(K_h(T_i - t)  - \E[K_h(T_i - t)|X_i]\bigg)$$

Each $A_{j,k}$ is an mean zero empirical process we will denote by $\mathbb{G}_n f_{jk}$ where

$$f_{jk} = b_{j}(X_i)b_{k}(X_i) \bigg(K_h(T_i - t)- \E[K_h(T_i - t)| X_i]\bigg)$$

For each $\mathbb{G}_n f_{jk}$ we will apply the following result from Van der Vaart 1998. 

For any bounded, measurable function $f$, then for every $x > 0 $

$$P_p(|\mathbb{G}_nf| > x) \leq 2 \text{exp} \bigg( -\frac{1}{4} \frac{x^2}{Pf^2 + x \|f\|_{\infty}/\sqrt{n}}\bigg)$$

 plugging in our $f_{jk}$

\begin{align}
  P_p(|\mathbb{G}_nf_{jk}| > x)  &= P_p(|\frac{1}{n}\sum_{i}b_j(X_i)b_k(X_i)'\bigg((K_h(T_i - t))  - \E[K_h(T_i - t|X_i)]\bigg)| > x) \\
  &\leq  2 \text{exp} \bigg( -\frac{1}{4} \frac{x^2}{Pf_{jk}^2 + x \|f_{jk}\|_{\infty}/\sqrt{n}}\bigg)
\end{align}

we have that $Pf_{jk}^2 < Ch^{-1}$ and $\|f_{jk}\|_{\infty} < Ch^{-1}$ $\forall f_{jk}$, so plugging that in

\begin{align}
  P_p(|\mathbb{G}_nf| > x)  &\leq  2 \text{exp} \bigg( -\frac{1}{4} \frac{x^2}{Ch^{-1} + x Ch^{-1}/\sqrt{n}}\bigg)
\end{align}

let $ x = \sqrt{n} t$

\begin{align}
  &\leq  2 \text{exp} \bigg( -\frac{1}{4} \frac{hnt^2}{C(1 + t)}\bigg)
\end{align}

Therefore 

\begin{align}
 P (|A|_{\infty} > t) &\leq  C p^2\text{exp} \bigg( - \frac{hnt^2}{(1 + t)}\bigg)
\end{align}

Let $t = \sqrt{\log(1/\epsilon)} \sqrt{\frac{\log(Cp^2)}{nh}}$

\begin{align}
 P (|A|_{\infty} > \sqrt{\log(1/\epsilon)} \sqrt{\frac{\log(C p^2)}{nh}}) &\leq  \epsilon \exp(-\frac{1}{1 + t}) \leq \epsilon \bigg)
\end{align}

Hence 

$$|A|_{\infty} = O_p(\sqrt{\frac{\ln(p)}{h^2 n}})$$

Which means that for every $\epsilon > 0$ there exists a $C_{\epsilon}$ such that 

$$\p(|A|_{\infty} \leq C_{\epsilon}\sqrt{\frac{\ln(p)}{h^2 n}} ) < \epsilon $$

    \item 
    
    The bound for the second term follows from a taylor expansion argument. We will show that  
    
    $$|\frac{1}{n} \sum b(X)b(X)'\E(K_h(T- t|X)) - \frac{1}{n} \sum b(X)b(X)'f(t|X)|_{\infty} = O_p(h^2)$$

    To show this second bound we will first prove that
    
    $$|E[K_h| X] - f(t|X)| \leq Ch^2$$

    Let $v = \frac{T - t}{h}$, therefore $ T = t + vh$

a mean value expansion of $f(t + vh|X)$ at $v = 0$ is 

$$f(t + vh|X) = f(t|X) + f'(t|X)vh + v^2h^2f''(t + \Tilde{v}h| X)$$

Where the last term is zero because we assume that $f''$ is bounded

\begin{align}
|\E[K_h(T - t))  - f(t|X)| X]| &= |\int \frac{1}{h}K\bigg(\frac{T - t}{h}\bigg) f(T|X) - f_0(t|X) d_T| \\
&= |\int K\big(v\big) f(t + vh|X) - f_0(t|X) d_v|
\end{align}

now we plug in our mean value expansion

\begin{align}
&= |\int K\big(v\big) \bigg( f(t|X) + f'(t|X)vh + v^2h^2f''(T + \Tilde{v}h| X) \bigg) - f_0(t|X) d_v| \\
&= \\
&\leq Ch^2
\end{align}

    \item 
    
    the third term follows from an application of Hoeffdings inequality

    $$|\frac{1}{n} \sum b(X)b(X)'f(t|X)  - Q|_{\infty} = O_p(\sqrt{\frac{\ln(p)}{hn}})$$
    
\end{enumerate}

Therefore combining these three terms we get  

$$|\hat{Q} - Q|_{\infty} \leq O_p(\sqrt{\frac{\ln(p)}{hn}}) + O_p(h^2) + O_p(\sqrt{\frac{\ln(p)}{hn}})$$

$$= O_p(\sqrt{\frac{\ln(p)}{hn}} + h^2)$$

\end{proof}

\paragraph{Proof of Lemma \ref{lemma:alpha_convergence_slow}}  
\begin{proof}
\label{proof:alpha_convergence_slow}
Using Lemma $\ref{lemma:q_and_m_convergence}$ and Assumptions \ref{assump:bounded_basis} and \ref{assump:alpha_slow_rate} we can apply the results of Theorem 1 \cite{chernozhukov2018automatic}. For their theorem three assumptions must be satisfied 

\begin{enumerate}
    \item basis functions bounded, which holds from Assumption \ref{assump:bounded_basis}
    \item rates $\epsilon^M_n$ and $\epsilon^G_n$, which follow from Lemma $\ref{lemma:q_and_m_convergence}$
    \item and the sparse approximation rate in Assumption \ref{assump:alpha_slow_rate} 
\end{enumerate}

Hence for any $r_L$ such that $o(r_L) = \sqrt{\frac{\ln(p)}{hn}} + h^2 $

       $$ \|\hat{\alpha}_{\ell} - \alpha_0 \|^2 = O_p(\sqrt{\frac{\ln(p)}{h n}} + h^2)\ln(\ln(n)))$$

\end{proof}

\paragraph{Proof of Lemma \ref{lemma:alpha_convergence_fast}} 
\begin{proof}
\label{proof:alpha_convergence_fast}
Using Lemma $\ref{lemma:q_and_m_convergence}$ and Assumptions \ref{assump:bounded_basis} and \ref{assump:alpha_fast_rate} we can apply the results of Theorem 3 \cite{chernozhukov2018automatic}. For $\epsilon = o(r_L)$

       $$ \|\hat{\alpha}_{\ell} - \alpha_0 \|^2 = O_p\bigg(\big(\sqrt{\frac{\ln(p)}{h n}} + h^2)\ln(\ln(n))\big) \bigg)$$

\end{proof}

\section{Justification of rates} \label{appendix:Justification of rates}

\paragraph{Proof of Corollary \ref{corollary:rate_condition_interaction}}
\label{proof:rate_condition_interaction}
This section outlines the condition will be sufficient for the interaction term in the remainder decomposition to go to zero \eqref{eq:rate_condition_interaction}. 

\begin{equation}
\label{eq:rate_condition_interaction}
     \|\hat{\alpha}_L(t,X) - {\alpha}_0(t, X)  \| \|\hat{\gamma}_L(t,X) - {\gamma}_0(t, X)  \| = o_p((nh)^{-\frac{1}{2}})
\end{equation}

\begin{enumerate}
    \item in the dense regime

Recall for the dense regime we have shown in Lemma \ref{lemma:alpha_convergence} that $\|\hat{\alpha}(t, X_i) - \alpha_0(t, X_i)\| = O_p(\sqrt{(\sqrt{\frac{\ln(p)}{h n}} + h^2)\ln(\ln(n))})$

We require $\hat{\gamma}(t, X)$ to be estimated at some uniform mean square rate $-d_{\gamma}$ such that. 

\begin{equation}
 \begin{aligned}
         &\sqrt{nh} \|\hat{\alpha}(t, X_i) - \alpha_0(t, X_i)\| \|\hat{\gamma}(t, X_i) - \gamma_0(t, X_i)\| \\
         &= O_p\bigg(n^{1/2} h^{1/2}\sqrt{(\sqrt{\frac{\ln(p)}{h n}} + h^2)\ln(\ln(n))} n^{-d_{\gamma}}\bigg) \overset{p}{\longrightarrow}0 
 \end{aligned}
\end{equation}

We need a $d_{\gamma}$ value so that the exponent on the $n$ term is less than zero. We solve for our required $d_{\gamma}$

\begin{equation}
\begin{aligned}
 \bigg(n^{1/2} h^{1/2}\sqrt{(\sqrt{\frac{\ln(p)}{h n}} + h^2)\ln(\ln(n))} n^{-d_{\gamma}}\bigg)
\end{aligned}
\end{equation}

For algebraic simplicity square the entire expression. We also ignore the $\ln(p)$ and $\ln(\ln(n))$ terms. In order to to balance bias and variance set $h = n^{-1/5}$ \cite{wasserman2006all}. 

\begin{equation}
\begin{aligned}
& n h (\sqrt{\frac{\ln(p)}{h n}} + h^2) n^{-2d_{\gamma}} \\
 &=n h h^{-1/2} n^{-1/2} n^{-2d_{\gamma}}  + n h h^2 n^{-2d_{\gamma}} \\
 &=n^{1/2-2d_{\gamma} } h^{1/2}  + n^{1 -2d_{\gamma}}h^{3} \\
  &=n^{1/2-2d_{\gamma} } n^{-1/10}  + n^{1 -2d_{\gamma}}h^{-6/10} \\
  &=n^{4/10 -2d_{\gamma}}  + n^{4/10 -2d_{\gamma}} \\
&= 2n^{4/10 -2d_{\gamma}} 
\end{aligned}
\end{equation}

So we need 

\begin{equation}
\begin{aligned}
     4/10 -2d_{\gamma} &< 0 \\
     2/10 &< d_{\gamma}  \\
     1/5 &< d_{\gamma}  \\
\end{aligned}
\end{equation}

    \item in the sparse regime
    
    From Lemma \ref{lemma:alpha_convergence_fast} we have the following rate for $\alpha$: $\|\hat{\alpha}_{\ell}(t, X_i) - \alpha_0(t, X_i) \| = O_p(\epsilon_{n}^\frac{-1 \xi}{(1 + 2 \xi)}r_L)$. We use the same procedure as done in the dense regime above. 

We need a $d_{\gamma}$ value so that the exponent on the $n$ term is less than zero. We solve for our required $d_{\gamma}$

\begin{equation}
\begin{aligned}
 \bigg(n^{1/2} h^{1/2}\bigg(\sqrt{\frac{\ln(p)}{h n}}+ h^2 \bigg)^{\frac{-1}{1 + 2 \xi}} (\sqrt{\frac{\ln(p)}{h n}}+ h^2)\ln(\ln(n)) n^{-d_{\gamma}}\bigg)
\end{aligned}
\end{equation}

Again we ignore the $\ln(p)$ and $\ln(\ln(n))$ terms.

\begin{equation}
\begin{aligned}
 \bigg(n^{1/2} h^{1/2}\bigg( h^{-1/2} n^{-1/2}+ h^2 \bigg)^{\frac{2 \xi}{1 + 2 \xi}}  n^{-d_{\gamma}}\bigg)
\end{aligned}
\end{equation}

Again set $h = n^{-1/5}$. 

\begin{equation}
\begin{aligned}
& n^{1/2} h^{1/2}\bigg( h^{-1/2} n^{-1/2}+ h^2 \bigg)^{\frac{2 \xi}{1 + 2 \xi}}  n^{-d_{\gamma}  } \\
&=n^{1/2} n^{-1/10}\bigg( 2n^{-2/5}  \bigg)^{\frac{2 \xi}{1 + 2 \xi}}  n^{-d_{\gamma}  }  \\
 &= n^{4/10}\bigg( 2n^{-2/5}  \bigg)^{\frac{2 \xi}{1 + 2 \xi}}  n^{-d_{\gamma}  } 
\end{aligned}
\end{equation}

So we need 
\begin{equation}
\begin{aligned}
     \frac{4}{ 10} - \frac{2}{5}\times\frac{2 \xi}{1 + 2 \xi}  -d_{\gamma}   &< 0 \\
 \frac{4}{ 10} - \frac{4 \xi}{5 + 10 \xi} &< d_{\gamma} \\
\end{aligned}
\end{equation}

\end{enumerate}

\section{Normality Proof} \label{section:normalityproof}

\paragraph{Proof of Theorem \ref{theor:asymptotic_linearity} (asymptotic linearity)} 

We want to show 

\begin{equation}
\label{eq:asymptotic_linearity}
\sqrt{nh}(\hat{\psi}(\beta_0)) = \sqrt{\frac{h}{n}} \sum_{i = 1}^n {\psi}(W_i, \beta_0, \gamma_0, \alpha_0) + o_p(1)    
\end{equation}

\begin{proof}
\label{proof:asymptotic_linearity}

To prove \eqref{eq:asymptotic_linearity} we will show 

\begin{equation}
   \frac{\sqrt{h}}{\sqrt{n}} \sum_{\ell = 1}^L  \sum_{i \in I_{\ell}} \psi(W_i, \tilde{\beta}_{\ell}, \hat{\gamma}_{\ell}, \hat{\alpha}_{\ell}) - {\psi}(W_i, \beta_0, \gamma_0, \alpha_0) \xrightarrow{p} 0 
\end{equation}

Consider one fold of the data $\ell$

\begin{equation}
\label{eq:one_fold}
{\psi}(W, \tilde{\beta}_{\ell}, \hat{\gamma}_{\ell}, \hat{\alpha}_{\ell}) - {\psi}(W, \beta_0, \gamma_0, \alpha_0) = g(W, \beta_0, \hat{\gamma}_{\ell}) - g(W, \beta_0, \gamma_0) + \phi(W, \tilde{\beta}_{\ell}, \hat{\gamma}_{\ell}, \hat{\alpha}_{\ell}) - \phi(W, \beta_0, \gamma_0, \alpha_0)   
\end{equation}

We go through the remainder decomposition that used for our result. 

Add and subtract to the right hand side of \eqref{eq:one_fold} $\pm \textcolor{blue}{\phi(W, \beta_0, \hat{\gamma}_{\ell}, \alpha_0)}$, $\pm \textcolor{WildStrawberry}{\phi(W, \tilde{\beta}_{\ell}, \gamma_0, \hat{\alpha}_{\ell})}$, and $\pm \textcolor{Emerald}{\phi(W, \beta_0, \gamma_0, \alpha_0)}$ and rearrange the terms.

\begin{align}
&= g(W, \beta_0, \hat{\gamma}_{\ell}) - g(W, \beta_0, \gamma_0)  \tag{1} \\
&+ \textcolor{blue}{\phi(W, \beta_0, \hat{\gamma}_{\ell}, \alpha_0)} - \textcolor{Emerald}{\phi(W, \beta_0, \gamma_0, \alpha_0)} \tag{2}\\  
&+ \textcolor{WildStrawberry}{\phi(W, \tilde{\beta}_{\ell}, \gamma_0, \hat{\alpha}_{\ell})} - \phi(W, \beta_0, {\gamma}_0,{\alpha}_0) \tag{3} \\
&+ \phi(W, \tilde{\beta}_{\ell}, \hat{\gamma}_{\ell}, \hat{\alpha}_{\ell}) - \textcolor{WildStrawberry}{\phi(W, \tilde{\beta}_{\ell}, \gamma_0, \hat{\alpha}_{\ell})} - \textcolor{blue}{\phi(W, \beta_0, \hat{\gamma}_{\ell}, \alpha_0)} +  \textcolor{Emerald}{\phi(W, \beta_0, \gamma_0, \alpha_0)} \tag{$\Delta$}
\end{align}

Let $W_{\ell}^c$ denote the observations not in fold $I_{\ell}$. Next subtract out and add back the means (conditional on $W_{\ell}^c$) of the  of the first three terms. After this, there are six core remainder terms left. 

\begin{align}
&= g(W, \beta_0, \hat{\gamma}_{\ell}) - g(W, \beta_0, \gamma_0)  - \textcolor{Maroon}{\E[ g(W, \beta_0, \hat{\gamma}_{\ell}) - g(W, \beta_0, \gamma_0) |W_{\ell^c}]} \tag{$R_1$} \\
&+ {\phi(W, \beta_0, \hat{\gamma}_{\ell}, \alpha_0)} - {\phi(W, \beta_0, \gamma_0, \alpha_0)} - \textcolor{Maroon}{\E[{\phi(W, \beta_0, \hat{\gamma}_{\ell}, \alpha_0)} - {\phi(W, \beta_0, \gamma_0, \alpha_0)} | W_{\ell}^c]} \tag{$R_2$}\\  
&+ {\phi(W, \tilde{\beta}_{\ell}, \gamma_0, \hat{\alpha}_{\ell})} - \phi(W, \beta_0, {\gamma}_0,{\alpha}_0) - \textcolor{MidnightBlue}{\E[{\phi(W, \tilde{\beta}_{\ell}, \gamma_0, \hat{\alpha}_{\ell})} - \phi(W, \beta_0, {\gamma}_0,{\alpha}_0) | W_{\ell}^c]} \tag{$R_3$} \\
&+ \phi(W, \tilde{\beta}_{\ell}, \hat{\gamma}_{\ell}, \hat{\alpha}_{\ell}) - {\phi(W, \tilde{\beta}_{\ell}, \gamma_0, \hat{\alpha}_{\ell})} - {\phi(W, \beta_0, \hat{\gamma}_{\ell}, \alpha_0)} +  {\phi(W, \beta_0, \gamma_0, \alpha_0) } \tag{$\Delta$} \\
&+ \textcolor{Maroon}{\E[g(W, \beta_0, \hat{\gamma}_{\ell}) - g(W, \beta_0, \gamma_0) |W_{\ell}^c] + \E[\phi(W, \beta_0, \hat{\gamma}_{\ell}, \alpha_0)] - \phi(W, \beta_0, \gamma_0, \alpha_0) | W_{\ell}^c]} \tag{$E_{1+2}$} \\
&+ \textcolor{MidnightBlue}{\E[{\phi(W, \tilde{\beta}_{\ell}, \gamma_0, \hat{\alpha}_{\ell})} - \phi(W, \beta_0, {\gamma}_0,{\alpha}_0) | W_{\ell}^c]} \tag{$E_{3}$}
\end{align}

This expression above is the full remainder expansion. We will show that 

$$\sum_{\ell = 1}^L \frac{\sqrt{h}}{\sqrt{n}} \sum_{i \in I_{\ell}} R_{1 \ell i} + R_{2 \ell i} + R_{3 \ell i} + \Delta_{ \ell i} + E_{1+2, \ell i} +E_{3, \ell i} \xrightarrow{p} 0$$

This will be proved by showing each of the six remainder terms goes to zero in probability

\begin{enumerate}
    \item Reminder $R_1$
    
    Though we assume rate conditions for our nuisance parameters in this paper, reminder $R_1$ goes to zero given weaker mild mean square convergence rate assumptions, and so we prove our result with this weaker condition.

   \begin{align}
       R_{1 \ell i} =& g(W_i, \beta_0, \hat{\gamma}_{\ell}) - g(W_i, \beta_0, \gamma_0)  - {\E[ g(W_i, \beta_0, \hat{\gamma}_{\ell}) - g(W, \beta_0, \gamma_0) |W_{\ell}^c]} \\
       &= \hat{\gamma}_{\ell}(t, X_i) - {\gamma}_{0}(t, X_i) - \E[\hat{\gamma}_{\ell}(t, X_i) - {\gamma}_{0}(t, X_i) | W_{\ell}^C ]
   \end{align}
   
First note that for a fold $\ell$ of our data we have 
   
 \begin{equation}
 \label{eq:r_1_conditional_to_zero}
     \begin{aligned}
             & \E_{W_{\ell}}[\Bigg( \sqrt{\frac{h}{n}} \sum_{i \in I_{\ell}} \bigg(\hat{\gamma}_{\ell}(t, X) - {\gamma}_{0}(t, X) - \E[\hat{\gamma}_{\ell}(t, X) - {\gamma}_{0}(t, X) | W_{\ell}^C ]\bigg) \Bigg)^2 | W_{\ell}^c] \\
       &\leq \frac{h n_{\ell}}{n} Var \bigg(\hat{\gamma}_{\ell}(t, X) - {\gamma}_{0}(t, X) | W_{\ell}^c\bigg) \\
       &\leq h \E_{W_{\ell}}\bigg( (\hat{\gamma}_{\ell}(t, X) - {\gamma}_{0}(t, X))^2 | W_{\ell}^c]\bigg)  \xrightarrow{p} 0 \text{  by Corollary \ref{corollary:rate_condition_interaction}}
     \end{aligned}
 \end{equation}

Apply the conditional markov inequality 

\begin{equation}
  P( \sqrt{\frac{h}{n}} \sum_{i \in I_{\ell}} R_{1 \ell i} > a |W_{\ell}^C) \leq P( |\sqrt{\frac{h}{n}} \sum_{i \in I_{\ell}} R_{1 \ell i}| > a|W_{\ell}^C) \leq \frac{\E[|\sqrt{\frac{h}{n}} \sum_{i \in I_{\ell}} \Delta_{ \ell i}|W_{\ell}^C]|}{a} \xrightarrow{p} 0  
\end{equation}

Now we show that we can move from a conditional probability statement $\bigg(P( \sqrt{\frac{h}{n}} \sum_{i \in I_{\ell}} R_{1 \ell i} > a |W_{\ell}^C) \xrightarrow{p} 0  \bigg)$  to an unconditional probability statement $\bigg(P( \sqrt{\frac{h}{n}} \sum_{i \in I_{\ell}} R_{1 \ell i} > a) \xrightarrow{p} 0  \bigg)$ 
 
By \eqref{eq:r_1_conditional_to_zero} $P( \sqrt{\frac{h}{n}} \sum_{i \in I_{\ell}} R_{1 \ell i} > a |W_{\ell}^C) \xrightarrow{p} 0$. Apply the DCT, since $P( \sqrt{\frac{h}{n}} \sum_{i \in I_{\ell}} R_{1 \ell i} > a |W_{\ell}^C)$ is bounded by 1, and conclude that $\E[P( \sqrt{\frac{h}{n}} \sum_{i \in I_{\ell}} R_{1 \ell i} > a |W_{\ell}^C)] \xrightarrow{} 0 $. Therefore since 
 \begin{equation}
  \label{eq:conditional_to_unconditional}
   \E[P( \sqrt{\frac{h}{n}} \sum_{i \in I_{\ell}} R_{1 \ell i} > a |W_{\ell}^C)] = P( \sqrt{\frac{h}{n}} \sum_{i \in I_{\ell}} R_{1 \ell i} > a)
 \end{equation}
 \begin{equation}
     \E[P( \sqrt{\frac{h}{n}} \sum_{i \in I_{\ell}} R_{1 \ell i} > a |W_{\ell}^C)] \rightarrow 0 \implies \sqrt{\frac{h}{n}} \sum_{i \in I_{\ell}} R_{1 \ell i} \xrightarrow{p} 0  
 \end{equation}

 Hence $\sqrt{\frac{h}{n}} \sum_{i \in I_{\ell}} R_{1 \ell i} \xrightarrow{p} 0 $ for each fold of the data $I_{\ell}$ 
 
 Summing across all folds we can conclude $\sum_{\ell = 1}^L \frac{\sqrt{h}}{\sqrt{n}} \sum_{i \in I_{\ell}} R_{1 \ell i } \xrightarrow{p} 0$

    \item Reminder $R_2$
    
    $\sum_{\ell = 1}^L \frac{\sqrt{h}}{\sqrt{n}} \sum_{i \in I_{\ell}} R_{2 \ell i } \xrightarrow{p} 0$  follows by same argument as $R_1$
    
    \item remainder $R_3$
    
    $\sum_{\ell = 1}^L \frac{\sqrt{h}}{\sqrt{n}} \sum_{i \in I_{\ell}} R_{3 \ell i}  \xrightarrow{p} 0$ follows by the same argument as $R_1 $
    
    \item remainder $\Delta$
    
    \begin{align}
        \Delta_{ \ell i} &= \phi(W_i, \tilde{\beta}_{\ell}, \hat{\gamma}_{\ell}, \hat{\alpha}_{\ell}) - {\phi(W_i, \tilde{\beta}_{\ell}, \gamma_0, \hat{\alpha}_{\ell})} - {\phi(W_i, \beta_0, \hat{\gamma}_{\ell}, \alpha_0)} +  {\phi(W_i, \beta_0, \gamma_0, \alpha_0)} \\
        &= K_h(T_i - t)\bigg[\hat{\alpha}(t, X_i)(Y_i - \hat{\gamma}(t, X)) -
       {\alpha}_0(t, X_i)(Y_i - \hat{\gamma}(t, X)) \\
       &\quad \quad \quad \quad \quad - \hat{\alpha}(t, X)(Y - {\gamma}_0(t, X_i)) + 
        \alpha_0(t, X_i)(Y_i - \gamma_0(t, X_i)) \bigg] \\
        &=  K_h(T_i - t)(\hat{\alpha}_{\ell}(t, X_i) - \alpha_0(t, X))(\hat{\gamma}_{\ell}(t, X_i) - \gamma_0(t, X_i)) 
    \end{align}
    
    Therefore 
    
\begin{equation}
\label{{eq:delta_remainder}}
        \sqrt{\frac{h}{n}} \sum_{i \in I_{\ell}} \Delta_{ \ell i} = \sqrt{\frac{h}{n}} \sum_{i \in I_{\ell}} K_h(T_i - t)(\hat{\alpha}_{\ell}(t, X_i) - \alpha_0(t, X_i))(\hat{\gamma}_{\ell}(t, X_i) - \gamma_0(t, X_i)) 
\end{equation}
    
    By the condition markov inequality we have that 
    
    $$P( \sqrt{\frac{h}{n}} \sum_{i \in I_{\ell}} \Delta_{ \ell i} > a |W_{\ell}^C) \leq P( |\sqrt{\frac{h}{n}} \sum_{i \in I_{\ell}} \Delta_{ \ell i}| > a|W_{\ell}^C) \leq \frac{\E[|\sqrt{\frac{h}{n}} \sum_{i \in I_{\ell}} \Delta_{ \ell i}||W_{\ell}^C]|}{a} $$
    
    So to bound the $\Delta$ reminder I will first show that $\E[|\sqrt{\frac{h}{n}} \sum_{i \in I_{\ell}} \Delta_{ \ell i}|W_{\ell}^C] \xrightarrow{p} 0$
    
\begin{equation}
\begin{aligned}
& \E[|\sqrt{\frac{h}{n}} \sum_{i \in I_{\ell}} \Delta_{ \ell i}| W_{\ell}^C] \\
&= \E[|\sqrt{\frac{h}{n}} \sum_{i \in I_{\ell}} K_h(T_i - t)(\hat{\alpha}_{\ell}(t, X_i) - \alpha_0(t, X_i))(\hat{\gamma}_{\ell}(t, X_i) - \gamma_0(t, X_i))| | W_{\ell}^C] \\
&= \sqrt{hn} \int_{\cX} \int_{\cT}| K_h(T_i - t)(\hat{\alpha}_{\ell}(t, X_i) - \alpha_0(t, X_i))(\hat{\gamma}_{\ell}(t, X_i) - \gamma_0(t, X_i))| f(X_i, T_i) d_{T_i} d_{X_i}\\
&= \sqrt{hn} \int_{\cX} |(\hat{\alpha}_{\ell}(t, X_i) - \alpha_0(t, X_i))(\hat{\gamma}_{\ell}(t, X_i) - \gamma_0(t, X_i))|\int_{\cT} |K_h(T_i - t) f(T_i|X_i)|   f(X_i) d_{T_i}  d_{X_i} \\
&= \sqrt{hn} \int_{\cX} |(\hat{\alpha}_{\ell}(t, X_i) - \alpha_0(t, X_i))(\hat{\gamma}_{\ell}(t, X_i) - \gamma_0(t, X_i))| \int_{\cT} |\frac{1}{h}K(\frac{T_i - t}{h}) f(T_i|X_i)|  f(X_i) d_{T_i}  d_{X_i} \\
\end{aligned}
\end{equation}

Where in the first inequality we substituted the definition of the delta remainder, and in the next inequality we switch the expectation with the sum, and used the fact that the observations are iid.  Next, substitute $f(T_i|X_i)$ by its second-order Taylor expansion around point $t$. Start by substituting $u_i = \frac{T_i - t}{h}$. 

\begin{equation}
\begin{aligned}
&= \sqrt{hn} \int_{\cX} |(\hat{\alpha}_{\ell}(t, X_i) - \alpha_0(t, X_i))(\hat{\gamma}_{\ell}(t, X_i) - \gamma_0(t, X_i))| \int_{\cU} |K(u_i) f(t - h u_i|X_i)  |  f(X_i) d_{u_i}  d_{X_i} \\
&= \sqrt{hn} \int_{\cX} |(\hat{\alpha}_{\ell}(t, X_i) - \alpha_0(t, X_i))(\hat{\gamma}_{\ell}(t, X_i) - \gamma_0(t, X_i))| \times \\
& \quad \quad \quad \quad \int_{\cU} |K(u_i) \big[f(t|X_i) - h u f'(t|X_i) + \frac{h^2 u_i^2}{2}f''(t|X_i) + \cdots  \big]  |  f(X_i) d_{u_i}  d_{X_i}
\end{aligned}
\end{equation}

Evaluate the inner integral with respect to $u_i$, recalling that $\int_{u_i} K(u_i) d_{u_i} = 1$.

\begin{equation}
\begin{aligned}
 &\leq \sqrt{hn} \int_{\cX} |(\hat{\alpha}_{\ell}(t, X_i) - \alpha_0(t, X_i))(\hat{\gamma}_{\ell}(t, X_i) - \gamma_0(t, X_i))| \times \\
& \quad \quad \quad \quad |f(t|X_i) + \frac{1}{2}h^2f''(t|X_i)\int_{u_i}u_i^2K(u_i) d_{u_i} + \cdots | f(X_i) d_{u_i}  d_{X_i} \\
    \end{aligned}
\end{equation}
 Recall $\int_{u_i} u_i^2 K(u_i) d_{u_i} \leq C$  and that $f''(t|X) \leq C$ 

\begin{equation}
\begin{aligned}
 &\leq \sqrt{hn} \int_{\cX} |(\hat{\alpha}_{\ell}(t, X_i) - \alpha_0(t, X_i))(\hat{\gamma}_{\ell}(t, X_i) - \gamma_0(t, X_i))| \times  |f(t|X_i) + o(h^2) | f(X_i)  d_{X_i} \\
 & \text{why does the o term become op and how am I able to take it outside} \\
&= \sqrt{hn} \int_{\cX} f(t|X_i)  |(\hat{\alpha}_{\ell}(t, X_i) - \alpha_0(t, X_i))(\hat{\gamma}_{\ell}(t, X_i)- \gamma_0(t, X_i))|  f(X_i) d_{X_i} + o_p(\sqrt{nh} h^2)
    \end{aligned}
\end{equation}
By the Cauchy-Swartz inequality and the fact that $nh^4 \rightarrow C$

\begin{equation}
\small
    \begin{aligned}
    &\leq \sqrt{hn}  \bigg(\int_{\cX} f(t|X)(\hat{\gamma}_{\ell}(t, X_i)- \gamma_0(t, X_i))^2  f(X_i) d_{X_i}\bigg)^{.5}  \bigg(\int_{\cX} f(t|X)(\hat{\alpha}_{\ell}(t, X_i)- \alpha_0(t, X_i))^2  f(X_i) d_{X_i}\bigg)^{.5} \\
    &\quad \quad \quad \quad + o_p(1)\\
     &= \sqrt{hn}  \|(\hat{\gamma}_{\ell}(t, X_i)- \gamma_0(t, X_i))\| \| \hat{\alpha}_{\ell}(t, X_i)- \alpha_0(t, X_i))\|  + o_p(1)       
    \end{aligned}
\end{equation}
 
   In order to control the $\|(\hat{\gamma}_{\ell}(t, X_i)- \gamma_0(t, X_i))\|$ and $\| \hat{\alpha}_{\ell}(t, X_i)- \alpha_0(t, X_i))\|$ we use the mean square convergence rates from Assumption \ref{assump:regressionrate}, and the rates assumed for $\hat{\alpha}$

\begin{equation}
    \begin{aligned}
     & \sqrt{hn}  \|(\hat{\gamma}_{\ell}(t, X_i)- \gamma_0(t, X_i))\| \| \hat{\alpha}_{\ell}(t, X_i)- \alpha_0(t, X_i))\|  + o_p(1) \xrightarrow{p}  0   
    \end{aligned}
\end{equation}

Hence 

$$\E[|\sqrt{\frac{h}{n}} \sum_{i \in I_{\ell}} \Delta_{ \ell i}|| W_{\ell}^C] \xrightarrow{p} 0$$

Apply the same condition to unconditional argument used in equation \eqref{eq:conditional_to_unconditional} 

\begin{equation}
   \sum_{\ell = 1}^L \frac{\sqrt{h}}{\sqrt{n}} \sum_{i \in I_{\ell}} \Delta_{\ell i} \xrightarrow{p}  0
\end{equation}

    \item remainder $E_{1+ 2}$
    
    \begin{align}
        E_{1+ 2} &= \E[g(W, \beta_0, \hat{\gamma}_{\ell}) - g(W, \beta_0, \gamma_0) |W_{\ell}^c] + \E[\phi(W, \beta_0, \hat{\gamma}_{\ell}, \alpha_0)] - \phi(W, \beta_0, \gamma_0, \alpha_0) | W_{\ell}^c] \\
        &= \E[\psi(W, \beta_0, \hat{\gamma}_{\ell}) - \psi(W, \beta_0, \gamma_0) |W_{\ell}^c]
    \end{align}
    
    follows by the same argument as the $\Delta$ term

    \item remainder $E_{3}$
    
    This is the ``double robustness'' remainder. When bias corrected moments are double robust, which happens if and only if our moment function is an affine transformation of the first stage estimator, the ``double robustness'' remainder is zero. This is what happens in the binary treatment case. This double robustness holds in our case as $h \rightarrow 0 $  

    We can use the same argument as \cite{colangelo2020double} and conclude 
    
    $E_{3} = O_p\bigg((\| \hat{\gamma} - \gamma_0\| + \| \hat{\alpha} - \alpha_0\|) \sqrt{nh}h^2 \bigg) = o_p(1)$

\end{enumerate}

All six of the remainder terms go to zero in probability. Add the six remainder terms together and apply the triangle inequality 
\begin{equation}
    \sum_{\ell = 1}^L \frac{\sqrt{h}}{\sqrt{n}} \sum_{i \in I_{\ell}} R_{1 \ell i} + R_{2 \ell i} + R_{3 \ell i} + \Delta_{ \ell i} + E_{1+2, \ell i} +E_{3, \ell i} \xrightarrow{p} 0
\end{equation}

Hence we can conclude 
\begin{equation}
    \sqrt{nh}(\hat{\psi}(\beta_0)) = \sqrt{\frac{h}{n}} \sum_{i = 1}^n {\psi}(W, \beta_0, \gamma_0, \alpha_0) + o_p(1) 
\end{equation}

\begin{equation}
\label{eq:op_1_psi}
    \sqrt{nh}(\hat{\psi}(\beta_0) - {\psi}(W, \beta_0, \gamma_0, \alpha_0)) =   o_p(1) 
\end{equation}

\end{proof}

Now some helpful algebra 
\begin{corollary}

How do we move from the equation above 
\begin{equation}
    \sqrt{nh}(\hat{\psi}(\beta_0) - {\psi}(W, \beta_0, \gamma_0, \alpha_0)) =   o_p(1) 
\end{equation}

to the equation of interest

\begin{equation}
    \begin{aligned}
         \sqrt{nh}(\hat{\beta}_t  -  \beta_0)  = \sqrt{\frac{h}{n}} \sum_{\ell = 1}^L \sum_{i \in I_{\ell}}[{\gamma}_0(t, X_i) - \beta_0 + \frac{K_h(T_i - t)}{f(t|X_i)}(Y_i - {\gamma}_0(t, X_i))]) +  o_p(1)
    \end{aligned}
\end{equation}

Recall 

\begin{equation}
\begin{aligned}
\frac{1}{n} \sum_{\ell = 1}^L \sum_{i \in I_{\ell}} {\psi}_{i}(\beta_0) &= \frac{1}{n} \sum_{\ell = 1}^L \sum_{i \in I_{\ell}}g(W_i, \beta_0, {\gamma_0}) + \phi(W_i, \beta_0, \gamma_0, \alpha_0) \\
     &=  \frac{1}{n} \sum_{\ell = 1}^L \sum_{i \in I_{\ell}} {\gamma}_0(t, X_i) - \beta_0 + \frac{K_h(T_i - t)}{f(t|X_i)}(Y_i - {\gamma}_0(t, X_i)) \\
\end{aligned}
\end{equation}

and

\begin{equation}
\begin{aligned}
     \frac{1}{n} \sum_{\ell = 1}^L \sum_{i \in I_{\ell}} \hat{\psi}_{i \ell}(\beta_0) &=  \frac{1}{n} \sum_{\ell = 1}^L \sum_{i \in I_{\ell}} g(W_i, \beta_0, \hat{\gamma}) + \phi(W_i, \beta_0, \hat{\gamma}, \hat{\alpha}) \\
     &= \frac{1}{n} \sum_{\ell = 1}^L \sum_{i \in I_{\ell}} \hat{\gamma}(t, X_i) - \beta_0 + \frac{K_h(T - t)}{\hat{f}(t|X_i)}(Y_i - \hat{\gamma}(t, X_i)) \\
     &= \frac{1}{n} \sum_{\ell = 1}^L \sum_{i \in I_{\ell}} \hat{\gamma}(t, X_i)  + \frac{K_h(T_i - t)}{\hat{f}(t|X_i)}(Y_i - \hat{\gamma}(t, X_i)) - \beta_0 \\
    &= \hat{\beta}_t  - \beta_0 \\
\end{aligned}
\end{equation}

Where the last equality followed from the definition of the estimator \eqref{eq:estimator}

Hence 

\begin{equation}
    \begin{aligned}
       \frac{1}{n} \sum_{\ell = 1}^L \sum_{i \in I_{\ell}}  \hat{\psi}(\beta_0) - {\psi}(\beta_0) = \hat{\beta}_t  -  \beta_0 - \frac{1}{n} \sum_{\ell = 1}^L \sum_{i \in I_{\ell}} [{\gamma}_0(t, X_i) - \beta_0 + \frac{K_h(T_i - t)}{f(t|X_i)}(Y_i - {\gamma}_0(t, X_i))]
    \end{aligned}
\end{equation}

By \eqref{eq:op_1_psi}

\begin{equation}
    \begin{aligned}
         \sqrt{nh}(\hat{\beta}_t  -  \beta_0 - \frac{1}{n} \sum_{\ell = 1}^L \sum_{i \in I_{\ell}}[{\gamma}_0(t, X_i) - \beta_0 + \frac{K_h(T_i - t)}{f(t|X_i)}(Y - {\gamma}_0(t, X_i))]) = o_p(1)
    \end{aligned}
\end{equation}

Therefore 
\begin{equation}
\label{eq:op_1_beta}
    \begin{aligned}
         \sqrt{nh}(\hat{\beta}_t  -  \beta_0)  = \sqrt{\frac{h}{n}} \sum_{\ell = 1}^L \sum_{i \in I_{\ell}}[{\gamma}_0(t, X_i) - \beta_0 + \frac{K_h(T_i - t)}{f(t|X_i)}(Y_i - {\gamma}_0(t, X_i))]) +  o_p(1)
    \end{aligned}
\end{equation}

\end{corollary}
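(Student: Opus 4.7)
The plan is to derive this corollary as a purely algebraic consequence of the asymptotic linearity already established in Theorem \ref{theor:asymptotic_linearity}, by evaluating both $\hat{\psi}(\beta_0)$ and $\psi(W,\beta_0,\gamma_0,\alpha_0)$ in light of the specific forms dictated by the estimator definition \eqref{eq:estimator} and the debiased moment \eqref{eq:debiased_moment_function}. No new probabilistic content is required; the statement is a translation from a $\psi$-level asymptotic representation to the corresponding $\beta$-level one.

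First I would expand the cross-fitted empirical average $\frac{1}{n}\sum_{\ell}\sum_{i\in I_\ell}\hat{\psi}_{i\ell}(\beta_0)$ by substituting in the plug-in nuisance estimates, so that each summand reads $\hat{\gamma}_\ell(t,X_i) - \beta_0 + K_h(T_i - t)\hat{\alpha}_\ell(t,X_i)(Y_i - \hat{\gamma}_\ell(t,X_i))$. Recognizing that all terms other than $-\beta_0$ aggregate to $\hat{\beta}_t$ by the estimator definition \eqref{eq:estimator}, and that the $-\beta_0$ pieces simply average to $-\beta_0$, I would obtain the clean identity $\frac{1}{n}\sum_{\ell}\sum_{i\in I_\ell}\hat{\psi}_{i\ell}(\beta_0) = \hat{\beta}_t - \beta_0$.

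Second, I would perform the analogous expansion for $\frac{1}{n}\sum_{\ell}\sum_{i\in I_\ell}\psi(W_i,\beta_0,\gamma_0,\alpha_0)$ at the true nuisance parameters, using the identification $\alpha_0(t,X_i) = 1/f(t|X_i)$ to rewrite $K_h(T_i-t)\alpha_0(t,X_i)$ as $K_h(T_i-t)/f(t|X_i)$. This makes each summand match the bracketed expression on the right-hand side of the target equation term-for-term.

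Finally, I would take the difference of these two identities, multiply through by $\sqrt{nh}$, and invoke \eqref{eq:op_1_psi} from Theorem \ref{theor:asymptotic_linearity} to conclude that the $\psi$-level difference is $o_p(1)$; rearranging so that $\sqrt{nh}(\hat{\beta}_t-\beta_0)$ appears alone on the left then yields the claim. There is no substantive obstacle here, since the corollary is essentially bookkeeping. The only care needed is to preserve the fold index $\ell$ and the within-fold index $i\in I_\ell$ throughout, so that each cross-fitted pair $\hat{\gamma}_\ell,\hat{\alpha}_\ell$ is evaluated only on its intended hold-out observations, and to ensure that the $-\beta_0$ contributions aggregate correctly after division by $n$.
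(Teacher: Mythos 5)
Your proposal is correct and follows essentially the same route as the paper: identify $\frac{1}{n}\sum_{\ell}\sum_{i\in I_\ell}\hat{\psi}_{i\ell}(\beta_0)$ with $\hat{\beta}_t-\beta_0$ via the estimator definition \eqref{eq:estimator}, expand the oracle average using $\alpha_0(t,X)=1/f(t|X)$, subtract, and invoke \eqref{eq:op_1_psi}. The only cosmetic difference is that you keep the estimated debiasing weight as $K_h(T_i-t)\hat{\alpha}_\ell(t,X_i)$ rather than the paper's $K_h(T_i-t)/\hat{f}(t|X_i)$, which is in fact more faithful to the estimator actually defined in the paper.
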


Now we are ready to study the bias 
\begin{lemma}
\label{lemma:asymptotic_bias}
$\hat{B_t}$
\end{lemma}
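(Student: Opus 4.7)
The plan is to start from the asymptotically linear representation established in equation \eqref{eq:op_1_beta}, and compute the leading-order bias of the single-observation term
\begin{equation}
m(W,t) := \gamma_0(t,X) - \beta_0 + \frac{K_h(T-t)}{f(t|X)}\bigl(Y - \gamma_0(t,X)\bigr).
\end{equation}
Since $\E[\gamma_0(t,X) - \beta_0] = 0$ by the definition of $\beta_0$, all of the bias comes from the kernel term. First I would iterate expectations, writing
\begin{equation}
\E[m(W,t)] = \E_X\!\left[\frac{1}{f(t|X)}\int K_h(T-t)\bigl(\gamma_0(T,X)-\gamma_0(t,X)\bigr) f(T|X)\,dT\right],
\end{equation}
using $\E[Y \mid T, X] = \gamma_0(T,X)$ to collapse the inner conditional expectation.

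Next I would perform the usual kernel change of variables $u = (T-t)/h$, then apply a second-order Taylor expansion of both $\gamma_0(t+uh,X)$ and $f(t+uh|X)$ around $t$. The $O(1)$ term vanishes because $\gamma_0(t,X) - \gamma_0(t,X) = 0$, and the $O(h)$ term integrates to zero using $\int u K(u)\,du = 0$ (symmetry of the kernel, assumed as part of the kernel setup alongside $\int K = 1$). Collecting the $O(h^2)$ contributions, I expect two surviving pieces:
\begin{equation}
\int K(u)\bigl[\gamma_0(t+uh,X)-\gamma_0(t,X)\bigr] f(t+uh|X)\,du = h^2 \left[\tfrac{1}{2}\partial_t^2 \gamma_0(t,X)\, f(t|X) + \partial_t \gamma_0(t,X)\, \partial_t f(t|X)\right]\!\int u^2 K(u)\,du + o(h^2).
\end{equation}
Dividing by $f(t|X)$ and taking $\E_X$ then yields precisely the stated formula
\begin{equation}
B_t = \E\!\left[\tfrac{1}{2}\partial_t^2 \E[Y|T=t,X] + \partial_t \E[Y|T=t,X]\,\frac{\partial_t f_{T|X}(t|X)}{f_{T|X}(t|X)}\right]\!\int u^2 K(u)\,du.
\end{equation}

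Finally I would argue that the $o(h^2)$ remainder in the Taylor expansion, once multiplied by $\sqrt{nh}$, is $o_p(1)$ under the bandwidth scaling $nh^5 = O(1)$ implicit in Theorem~\ref{theor:asymptotic_normality}, and that the contribution from the $\phi(W,\beta_0,\hat\gamma,\hat\alpha)-\phi(W,\beta_0,\gamma_0,\alpha_0)$ replacement is absorbed into the $o_p(1)$ of Theorem~\ref{theor:asymptotic_linearity}. The main obstacle I anticipate is purely bookkeeping: tracking the boundedness and smoothness conditions (bounded $\partial_t^2 \gamma_0$, bounded $\partial_t^2 f_{T|X}$, and dominated convergence to justify exchanging the $\E_X$ with the limit in $h$) needed to guarantee the $o(h^2)$ remainder is uniform in $X$, so that taking the outer expectation $\E_X$ preserves the order. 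These regularity conditions should be added as mild smoothness assumptions alongside Assumption~\ref{assump:identification}, mirroring the conditions used in the corresponding step of \cite{colangelo2020double}.
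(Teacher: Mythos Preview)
Your proposal is correct and follows essentially the same route as the paper: start from the asymptotic linearity \eqref{eq:op_1_beta}, use $\E[\gamma_0(t,X)]=\beta_0$ to isolate the kernel term, iterate expectations, and Taylor-expand the inner integral to extract the $h^2$ coefficient. Your write-up is in fact more explicit than the paper's own proof, which simply asserts the Taylor-expansion identity without spelling out the change of variables or the vanishing of the $O(h)$ term.
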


\begin{proof}

take expectations on both sides of \eqref{eq:op_1_beta}

\begin{equation}
    \begin{aligned}
         \E[(\hat{\beta}_t  -  \beta_0)]  &= \E[\frac{1}{n} \sum_{\ell = 1}^L \sum_{i \in I_{\ell}}[{\gamma}_0(t, X_i) - \beta_0 + \frac{K_h(T_i - t)}{f(t|X_i)}(Y_i - {\gamma}_0(t, X_i))]) +  o_p(1)]\\
         &=  \E[{\gamma}_0(t, X_i) - \beta_0 + \frac{K_h(T_i - t)}{f(t|X_i)}(Y_i - {\gamma}_0(t, X_i))]) +  o_p(1)]
    \end{aligned}
\end{equation}

Since $\E[{\gamma}_0(t, X_i)] = \beta_0 $

\begin{equation}
    \begin{aligned}
         \E[(\hat{\beta}_t  -  \beta_0)] 
         &=  \E[\frac{K_h(T_i - t)}{f(t|X_i)}(Y_i - {\gamma}_0(t, X_i))]) +  o_p(1)]
    \end{aligned}
\end{equation}

ignoring the $o_p(1)$ term for now, apply the law of iterated expectations. We now add subscripts on the expectation notation to keep straight what variables we are integrating over.  

\begin{equation}
    \begin{aligned}
         \E_{XT}[(\hat{\beta}_t  -  \beta_0)] 
         &=  \E_X[\E_T[\frac{K_h(T_i - t)}{f(t|X_i)}(Y_i - {\gamma}_0(t, X_i))]] \\
                  &=  \E_X[\frac{1}{f(t|X_i)}  \E_T[K_h(T_i - t)(Y_i - {\gamma}_0(t, X_i))| X_i]] \\
                    &=  \E_X[\frac{1}{f(t|X_i)}  \E_T[K_h(T_i - t)({\gamma}_0(T_i, X_i) - {\gamma}_0(t, X_i))| X_i]] \\
    \end{aligned}
\end{equation}

Now let us focus on the inner expectation from the line above 

\begin{equation}
    \begin{aligned}
         \E_T[K_h(T_i - t)(Y_i - {\gamma}_0(t, X_i))| X_i]] &= \int_{\cT} K_h(T_i - t)({\gamma}_0(T_i, X_i) - {\gamma}_0(t, X_i)) f(T_i| X_i) d_{T_i}  \\
         &= h^2 R(K) [\partial_t \gamma_0(t, X)  {\partial_t f(t|X)} + \frac{1}{2}\partial^2_t \gamma(t,X) f(t|X) ] + O(h^3)
    \end{aligned}
\end{equation}

Hence

\begin{equation}
    \E_{XT}[(\hat{\beta}_t  -  \beta_0)] = \E_{X}[\frac{1}{f(t|X)} h^2 R(K) [\partial_t \gamma_0(t, X)  {\partial_t f(t|X)} + \frac{1}{2}\partial^2_t \gamma(t,X) f(t|X) ] + O(h^3)]
\end{equation}

\end{proof}

Now we move onto the variance

\begin{lemma}
\label{lemma:asymptotic_var}
$\hat{V_t}$ 
\end{lemma}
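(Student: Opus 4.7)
The plan is to compute $V_t = \lim_{h\to 0} h \cdot \mathrm{Var}(\psi(W,\beta_0,\gamma_0,\alpha_0))$, starting from the asymptotic-linearity representation \eqref{eq:op_1_beta} established in the proof of Theorem \ref{theor:asymptotic_linearity}. Since the i.i.d.\ average there has variance $h \cdot \mathrm{Var}(\psi)$ and $\E[\psi] = 0$ by the construction of the debiasing term together with Assumption \ref{assump:identification}, it suffices to evaluate $\lim_{h\to 0} h \cdot \E[\psi^2]$. I will then verify consistency $\hat V_t \xrightarrow{p} V_t$ by applying the same decomposition to the plug-in empirical analogue.

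Expanding the square of $\psi = (\gamma_0(t,X) - \beta_0) + K_h(T-t)\alpha_0(t,X)(Y - \gamma_0(t,X))$ produces three pieces. The squared first piece has bounded variance and so its contribution is $O(h) \to 0$. The cross piece, after conditioning on $X$, reduces to a multiple of $\E[K_h(T-t)(\gamma_0(T,X)-\gamma_0(t,X))\mid X]$, and a second-order Taylor expansion of $\gamma_0(\cdot,X)f(\cdot|X)$ about $T = t$ (exactly as in Lemma \ref{lemma:asymptotic_bias}, using that the kernel has zero first moment) shows this to be $O(h^2)$, so the outer $h$ makes it $O(h^3) \to 0$.

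The main calculation is the third piece $h \cdot \E[K_h(T-t)^2 \alpha_0(t,X)^2 (Y - \gamma_0(t,X))^2]$. Writing $Y - \gamma_0(t,X) = (Y - \gamma_0(T,X)) + (\gamma_0(T,X) - \gamma_0(t,X))$ and conditioning on $(T,X)$, the first summand has conditional mean zero, so only the conditional residual variance $\sigma^2(T,X) := \mathrm{Var}(Y \mid T, X)$ contributes to the squared term; the remaining $(\gamma_0(T,X)-\gamma_0(t,X))^2$ piece is $O(h^2 u^2)$ after the substitution below and so disappears. Performing the change of variables $u = (T-t)/h$ turns $K_h(T-t)^2 \, dT$ into $h^{-1} K(u)^2 \, du$, so the outer $h$ exactly cancels the $h^{-1}$ and yields
\begin{equation}
\int K(u)^2 \, du \cdot \int \frac{\sigma^2(t + hu, X)}{f(t|X)^2} f(t+hu|X) f(X) \, du \, dX,
\end{equation}
which tends as $h \to 0$ to $\int K(u)^2 \, du \cdot \E[\sigma^2(t,X)/f(t|X)] = V_t$, matching the theorem statement (with $\mathrm{var}[\gamma(t,X)]$ read as the conditional residual variance $\sigma^2(t,X)$).

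Finally, consistency $\hat V_t \xrightarrow{p} V_t$ follows by applying the same decomposition to $(h/n)\sum_{\ell,i}\hat\psi_{\ell i}^2$: cross-fitting lets us treat $\hat\alpha_\ell,\hat\gamma_\ell$ as deterministic conditional on the training fold, the $\cL_2$-rates from Lemmas \ref{lemma:alpha_convergence_slow}--\ref{lemma:alpha_convergence_fast} and Assumption \ref{assump:regressionrate} control the error from replacing nuisances by estimates, and a standard uniform law-of-large-numbers argument replaces the empirical average by its expectation. The main obstacle throughout is the careful accounting of powers of $h$: the $h^{-2}$ hidden inside $K_h^2$ must be matched exactly by the $h$ prefactor and the Jacobian $h$ from the $u$-substitution, so any premature Taylor truncation that drops a factor of $h$ would introduce a spurious divergence; I would therefore keep the full integrands until after the change of variables before taking $h \to 0$.
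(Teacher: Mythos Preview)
Your proposal is correct and follows essentially the same approach as the paper: the paper's proof simply writes the variance of the influence function $\psi$ and then says ``we use the same argument as $\hat B_t$ above to conclude,'' i.e., the second-order Taylor expansion and change of variables $u=(T-t)/h$ from Lemma~\ref{lemma:asymptotic_bias}. You have spelled out that argument in full, including the decomposition of $\psi^2$ into the three pieces and the careful $h$-bookkeeping, which the paper leaves implicit; your note about interpreting $\mathrm{var}[\gamma(t,X)]$ as the conditional residual variance $\sigma^2(t,X)=\mathrm{Var}(Y\mid T=t,X)$ is also the right reading.
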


\begin{proof}

Following from equation \eqref{eq:op_1_beta}, applying the CLT

\begin{equation}
    Var({\gamma}_0(t, X_i) - \beta_0 + \frac{K_h(T_i - t)}{f(t|X_i)}(Y_i - {\gamma}_0(t, X_i))]) +  o_p(1))
\end{equation}

\begin{equation}
\begin{aligned}
     \hat{V_t} &= \frac{1}{n} \sum_{\ell = 1}^L \sum_{i \in I_{\ell}} \bigg( g(W_i, \hat{\beta}, \hat{\gamma}_{\ell}) + \phi(W_i, \hat{\beta},  \hat{\gamma}_{\ell}, \hat{\alpha}_{\ell} ) \bigg)^2 \\
     &= \frac{1}{n} \sum_{\ell = 1}^L \sum_{i \in I_{\ell}} \bigg( \hat{\gamma}(t, X_i)  - \hat{\beta} + \frac{K_h(T_i - t)}{\hat{f}(t|X_i)}(Y_i - \hat{\gamma}(t, X_i))  \bigg)^2
\end{aligned}
\end{equation}

we use the same argument as $\hat{B_t}$ above to conclude 

\end{proof}

\paragraph{Proof for Theorem \ref{theor:asymptotic_normality}}

\begin{equation}
\sqrt{n}(\hat{\theta} - \theta_0) \xrightarrow[]{d} N(0, V), \quad \hat{V} \xrightarrow[]{p} V 
\end{equation}

Given the asymptotically linear result above, and our arguments for $\hat{B_t}$  and $\hat{V_t}$,  normality follows from the central limit theorem

\section{Empirical Examples } \label{section:more_emperical_examples}

Continuous treatment effects are estimated in many different sub-areas of economics. A few examples are included below \footnote{if the reader has suggestions of other examples of continuous treatment effects, please send to klosins@mit.edu - would be much appreciated}.

\begin{itemize}
    \item In the political economy literature \cite{cantoni2020precinct} study voting costs specifically using distance to polling location as the treatment of interest. 
    \item  \cite{deshpande2019screened} study how the increased travel times due to closings of Social Security Administration field offices impact disability insurance applications.  
     \item \cite{liu_geography_2020} study the spatial gender wage gap. They run a linear model to predict wages as a function covariates including gender as well as there variable of interest, commute time - a continuous variable that is the treatment of interest 
     \item In the immigration literature, often the ``percentage of imgrants'' in geographic area is a continuous variable of interest e.g.  \cite{borjas2013analytics}
     \item In the trade literature, distance is a common friction that is studied: For example \cite{brei2018distance} look at the effect of two distance measures  ``measuring the effect of cross-border distance relative to that of domestic distance'' as an obstacle to trade.

     \item \cite{diamond2019wants} treatment of interest is distance from low income housing. They look at ``spillovers of properties financed by the Low Income Housing Tax Credit (LIHTC) onto neighborhood residents''

     \item \cite{yagan2019employment} using longitudinal linked employer-employee data to isolate causal effects of Great Recession local shocks on employment. The parameter of interest is the causal effect on one's 2015 outcomes of living in 2007 in a local area that experienced a one-unit larger Great Recession shock.  
     
\end{itemize}

\end{document}